\documentclass[12pt]{article}
\textwidth=16.5cm \textheight=24cm

\usepackage{hyperref}
\usepackage{amssymb}
\usepackage{amsmath}
\usepackage{enumerate}
\usepackage{graphicx}
\usepackage{epstopdf}
\usepackage{cite}


\newtheorem{theorem}{Theorem}

\newtheorem{example}{Example}
\newtheorem{lemma}{Lemma}

\newtheorem{remark}{Remark}

\input amssym.def

\newcommand\btd{\raise 2pt \hbox{$\hat\bigtriangledown$}\hskip 1.5pt}
\newcommand\bt{\raise 2pt \hbox{$\bigtriangledown$}\hskip 1.5pt}
 \hoffset = -1truecm \voffset = -2truecm
\usepackage{indentfirst}
\setlength{\parindent}{2em}

\begin{document}

\title{Tighter monogamy relations for the Tsallis-$q$ and R\'{e}nyi-$\alpha$ entanglement in multiqubit systems}

\author{Rongxia Qi$^{1}$, Yanmin Yang$^{1,2}$ $\thanks{e-mail: ym.yang@kust.edu.cn}$, Jialing Zhang$^{1,2}$, Wei Chen$^{3}$
\\
{\small $^{1}$ Faculty of Science,   Kunming University of Science and Technology, Kunming,  650500, P.R. China}\\
{\small $^{2}$ Research center for Mathematics and Interdisciplinary Sciences,}\\
{\small  Kunming University of Science and Technology, Kunming,  650500, P.R. China}\\
{\small $^{3}$ School of Computer Science and Technology, }\\
{\small Dongguan University of Technology, Dongguan, 523808, P.R. China}\\
}
\date{}
\maketitle

\begin{abstract}
Monogamy relations characterize the distributions of quantum entanglement in multipartite systems. In this work, we present some tighter monogamy relations in terms of  the power of the Tsallis-$q$ and R\'{e}nyi-$\alpha$ entanglement in multipartite systems. We show that these new monogamy relations of multipartite entanglement with tighter lower bounds than the existing ones. Furthermore, three examples are given to illustrate the tightness.

\medskip
\textbf{Keywords}
Monogamy relations, the Tsallis-$q$ entanglement, the R\'{e}nyi-$\alpha$ entanglement
\end{abstract}

\section{Introduction}
Quantum entanglement is an essential feature in terms of quantum mechanics, which distinguishes quantum mechanics from the classical world and plays a very important role in communication, cryptography, and computing. A key property of quantum entanglement is the monogamy relations \cite{bib1,bib2}, which is a quantum systems entanglement with one of the other subsystems limits its  entanglement with the remaining ones, known as the monogamy of entanglement (MoE) \cite{bib2,bib3}. For any tripartite quantum state $\rho_{A\mid BC}$, MoE can be expressed as the following inequality $\mathcal{E}(\rho_{A|BC})\geq \mathcal{E}(\rho_{AB})+\mathcal{E}(\rho_{AC})$, where $\rho_{AB}={\rm tr}_C(\rho_{A|BC})$, $\rho_{AC}={\rm tr}_B(\rho_{A|BC})$, and $\mathcal{E}$ is an quantum entanglement measure. Furthermore, Coffman, Kundu and Wootters  expressed that the squared concurrence also satisfies the monogamy relations in multiqubit states \cite{bib1}. Later  the monogamy relations are widely promoted to other entanglement measures such as entanglement of formation \cite{bib4}, entanglement negativity \cite{bib5}, the Tsallis-$q$ and R\'{e}nyi-$\alpha$  entanglement  \cite{bib6,bib7}. These monogamy relations will help us to have a further understanding of the quantum information theory \cite{bib8}, even black-hole physics \cite{bib9} and condensed-matter physics \cite{bib10}.
In \cite{bib11,bib12}, the authors prove that the  $\eta$th power of Tsallis-$q$  entanglement satisfies monogamy relations for $2\leq q\leq 3$, the power $\eta\geq1$,  the R\'{e}nyi-$\alpha$  entanglement also satisfies  monogamy relations for $\alpha\geq2$, the power $\eta\geq1$, and  $2>\alpha\geq\frac{\sqrt{7}-1}{2}$, the power $\eta\geq2$ .

Our paper is organized as follows. In sec.\ref{sec2}, we review some basic preliminaries  of concurrence, Tsallis-$q$, and R\'{e}nyi-$\alpha$ entanglement. In Sec.\ref{sec3}, we develop a class of monogamy relations in terms of the Tsallis-$q$ entanglement, they are tighter than the results in \cite{bib11}. In Sec.\ref{sec4}, we explore a class of monogamy relations based on the R\'{e}nyi-$\alpha$ entanglement which are tighter than the results in \cite{bib12}. In Sec.\ref{sec5}, we summarize our results.

\section{Basic preliminaries}\label{sec2}
We first recall the definition of concurrence. For a bipartite pure state $|\varphi\rangle_{AB}$, the concurrence  can be defined as \cite{bib17,bib18,bib19}
\begin{equation}
C(|\varphi\rangle_{AB})=\sqrt{2(1-tr\rho_{A}^2)},
\end{equation}
where $\rho_{A}=tr_B(|\varphi\rangle_{AB}\langle\varphi|)$.

For any mixed state $\rho_{AB}$, its concurrence is defined via the convex-roof extension in \cite{bib20}
\begin{equation}
C(\rho_{AB})=min\sum_j p_j C(|\varphi_j\rangle_{AB}),
\end{equation}
where the minimum is taken over all possible pure state decompositions of $\rho_{AB}= \sum\limits_jp_j| \varphi_j\rangle_{AB}\langle\varphi_j|$, and $\sum\limits_jp_j=1$.

It has been proved that the concurrence $C(\rho_{A|B_{1}\cdots B_{N-1}})$ of mixed state $\rho_{A| B_{1}\cdots B_{N-1}}$  has an important property such that \cite{bib21}
\begin{equation}\label{C}
C^{2}(\rho_{A|B_{1}\cdots B_{N-1}})\geq C^{2}(\rho_{A|B_{1}})+C^{2}(\rho_{A|B_2\cdots B_{N-1}})\geq \cdots \geq\sum_{i=1}^{N-1} C^{2}(\rho_{A|B_{i}}),
\end{equation}
where $\rho_{A|B_{i}}=\text{tr}_{B_{1}\cdots B_{i-1}B_{i+1}\cdots B_{N-1}}(\rho_{A|B_{1}\cdots B_{N-1}})$.

Quantum entanglement plays an important role in quantum information. Another well-known quantum entanglements are Tsallis-$q$ entanglement and R\'{e}nyi-$\alpha$ entanglement. For any bipartite pure state $| \varphi\rangle_{AB}$, the Tsallis-$q$ entanglement is defined as \cite{bib23}.
\begin{equation}
T_q(|\varphi\rangle_{AB})=S_{q}(\rho_A)=\frac{1}{q-1}(1-tr\rho_{A}^q),
\end{equation}
where $ q\geq 0$, $q\neq1$, and $\rho_A=tr_B(|\varphi\rangle_{AB}\langle\varphi|)$. When $q$ tends to 1, the Tsallis-$q$ entropy converges to the von Neumann entropy.

For a mixed state $\rho_{AB}$, the Tsallis-$q$ entanglement is defined by its convex-roof extension, which can be expressed as
\begin{equation}
T_q{(\rho_{AB})}=min \sum_i p_i T_q(|\varphi_i\rangle_{AB}),
\end{equation}
where the minimum is taken over all possible pure state decomposition of $\rho_{AB}= \sum\limits_ip_i| \varphi_i\rangle_{AB}\langle\varphi_i|$.

When  $\frac{5-\sqrt{13}}{2}\leq q\leq\frac{5+\sqrt{13}}{2}$, for any  bipartite pure state $| \varphi\rangle_{AB}$, it has been explored that the Tsallis-$q$ entanglement $T_q(|\varphi\rangle_{AB})$ has an analytical formula \cite{bib13},
\begin{equation}
T_q(|\varphi\rangle_{AB})=g_q(C^2(|\varphi\rangle_{AB})),
\end{equation}
where the function $g_q(x)$ is defined as
\begin{equation}
\begin{split}
g_{q}(x)=&\frac{1}{q-1}\left[1-\left(\frac{1+\sqrt{1-x}}{2}\right)^{q}
-\left(\frac{1-\sqrt{1-x}}{2}\right)^{q}\right],
\label{g_q}
\end{split}
\end{equation}
for $0 \leq x \leq 1$, and $g_q(x)$ is an increasing monotonic and convex function in \cite{bib24}. Specially, for $2\leq q\leq3$, the function $g_q(x)$ has an important property \cite{bib23}
\begin{equation}\label{g(x2+y2)}
g_q({x^2+y^2})\geq g_q(x^{2})+g_q(y^{2}).
\end{equation}

When $\frac{5-\sqrt{13}}{2}\leq q\leq\frac{5+\sqrt{13}}{2}$,  for any two-qubit mixed state $\rho$ , the Tsallis-$ q $ entanglement  can be expressed as   $T_q(\rho)=g_q(C^2(\rho))$  \cite{bib24}.

Now, we recall some preliminaries of the R\'{e}nyi-$\alpha$ entanglement. For a bipartite pure state $| \varphi\rangle_{AB}$, the R\'{e}nyi-$\alpha$ entanglement can be  defined as \cite{bib25}
\begin{equation}
E_\alpha(|\varphi\rangle_{AB})=\frac{1}{1-{\alpha}}log_2(tr\rho_A^{\alpha}),
\end{equation}
where  $\alpha>0$, and $\alpha\neq1$, $\rho_A=tr_B(|\varphi\rangle_{AB}\langle\varphi|)$. When $\alpha$ tends to 1, the R\'{e}nyi-$\alpha$ entropy converges to the von Neumann entropy.

For a bipartite mixed state $\rho_{AB}$, the R\'{e}nyi-$\alpha$ entanglement can be defined as
\begin{equation}
E_\alpha(\rho_{AB})=min\sum_i p_i E_\alpha(|\varphi_i\rangle_{AB}),
\end{equation}
where the minimum is taken over all possible pure state decompositions $\{p_i,\varphi_{AB}^i\}$ of $\rho_{AB}$.

When $\alpha\geq \frac{\sqrt{7}-1}{2}$, for any two-qubit state $\rho_{AB}$, the R\'{e}nyi-$\alpha$ entanglement has an analytical formula \cite{bib25,bib26}
\begin{equation}
E_\alpha(\rho_{AB})=f_\alpha(C(\rho_{AB})),
\end{equation}
where $f_\alpha(x)$ can be expressed as
\begin{equation}
f_\alpha(x)=\frac{1}{1-\alpha}log_2\bigg[\bigg(\frac{1-\sqrt{1-x^2}}{2}\bigg)^{\alpha}+\bigg(\frac{1+\sqrt{1-x^2}}{2}\bigg)^{\alpha}\bigg],
\end{equation}
$0\leq x\leq 1$,  and $f_{\alpha}(x)$ is a monotonically increasing convexity function.

For $\alpha\geq2$, the function $f_{\alpha}(x)$ satisfies the following inequality \cite{bib26},
\begin{equation} \label{e1}
f_{\alpha}(\sqrt{x^2+y^2})\geq f_{\alpha}(x)+f_{\alpha}(y).
\end{equation}

For $\frac{\sqrt{7}-1}{2}\leq\alpha<2$, the function $f_{\alpha}(x)$ has an important property such that \cite{bib27}
\begin{equation} \label{e2}
f_\alpha^2(\sqrt{x^2+y^2})\geq f_\alpha^2(x)+ f_\alpha^2(y).
\end{equation}

\section{Tighter monogamy relations in terms of the Tsallis-$q$ entanglement}\label{sec3}
To present the tighter monogamy relations of the Tsallis-$q$  entanglement in  multipartite systems, we introduce three lemmas as follows.

\begin{lemma} \label{lem1}
For $0\leq x\leq 1$ and $\mu\geq1$, we have
\begin{equation}\label{inglemmma1}
\begin{split}
(1+x)^{\mu}&\geq 1+\frac{\mu^2}{\mu+1}x+(2^{\mu}-\frac{\mu^2}{\mu+1}-1)x^{\mu} \\ & \geq1+\frac{\mu}{2}x+(2^{\mu}-\frac{\mu}{2}-1)x^{\mu}\geq1+(2^{\mu}-1)x^{\mu}.
\end{split}
\end{equation}
\end{lemma}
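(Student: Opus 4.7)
The plan is to prove the chain of three inequalities from right to left. The two rightmost comparisons are routine: subtracting the sides of the rightmost gives $\tfrac{\mu}{2}(x-x^{\mu})$, and subtracting the sides of the middle one gives
\[
\left(\frac{\mu^{2}}{\mu+1}-\frac{\mu}{2}\right)(x-x^{\mu})
=\frac{\mu(\mu-1)}{2(\mu+1)}(x-x^{\mu}).
\]
Both are $\geq 0$ for $\mu\geq 1$ and $0\leq x\leq 1$ via the elementary fact that $x^{\mu}\leq x$ in this range, so each of these inequalities is a one-line verification.

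The leftmost inequality is the substantive step. I would introduce
\[
F(x)=(1+x)^{\mu}-1-\frac{\mu^{2}}{\mu+1}x-Bx^{\mu},\qquad
B:=2^{\mu}-1-\frac{\mu^{2}}{\mu+1},
\]
so that $F(0)=F(1)=0$ (the first by inspection, the second by the choice of $B$) and $F'(0)=\mu-\mu^{2}/(\mu+1)=\mu/(\mu+1)>0$. The idea is to establish $F\geq 0$ on $[0,1]$ via a concavity analysis governed by
\[
F''(x)=\mu(\mu-1)\,x^{\mu-2}\Bigl[\bigl((1+x)/x\bigr)^{\mu-2}-B\Bigr],
\]
using the monotone auxiliary quantity $(1+x)/x\geq 2$ on $(0,1]$. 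For $1\leq \mu\leq 2$ the exponent $\mu-2$ is nonpositive, so $((1+x)/x)^{\mu-2}\leq 2^{\mu-2}$; provided $B\geq 2^{\mu-2}$, the bracket is nonpositive and $F$ is concave on $[0,1]$, so the boundary zeros force $F\geq 0$. For $\mu\geq 2$ the quantity $((1+x)/x)^{\mu-2}$ is strictly decreasing from $+\infty$ down to $2^{\mu-2}\leq B$, crossing $B$ at a unique $x_{0}\in(0,1]$; hence $F$ is convex on $[0,x_{0}]$ and concave on $[x_{0},1]$, so, combined with $F(0)=F(1)=0$ and $F'(0)>0$, $F$ first rises above $0$ on $[0,x_{0}]$ and then stays above the chord joining $(x_{0},F(x_{0}))$ and $(1,0)$ on $[x_{0},1]$, yielding $F\geq 0$ throughout.

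The main obstacle is the one-variable auxiliary inequality $B\geq 2^{\mu-2}$ on which both cases rely, equivalently
\[
\tfrac{3}{4}\cdot 2^{\mu}\geq 1+\frac{\mu^{2}}{\mu+1},\qquad \mu\geq 1.
\]
Equality holds at $\mu=1$, so it suffices to show that the $\mu$-derivative of the difference $\tfrac{3}{4}\cdot 2^{\mu}-1-\mu^{2}/(\mu+1)$ is nonnegative for $\mu\geq 1$; this derivative equals $\tfrac{3}{4}(\ln 2)\,2^{\mu}-\mu(\mu+2)/(\mu+1)^{2}$, and the first term already exceeds $\tfrac{3}{2}\ln 2>1$ at $\mu=1$ while the second is bounded above by $1-1/(\mu+1)^{2}<1$. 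This single scalar check is the only genuine computation required in the whole proof.
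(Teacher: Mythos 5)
Your proof is correct, but your treatment of the essential first inequality is genuinely different from the paper's. The paper divides through by $x^{\mu}$, sets $f(\mu,x)=\bigl((1+x)^{\mu}-\tfrac{\mu^{2}}{\mu+1}x-1\bigr)/x^{\mu}$, and shows $\partial f/\partial x\leq 0$ on $(0,1]$ (reducing to $(1+x)^{\mu-1}\geq 1+\tfrac{\mu(\mu-1)}{\mu+1}x$), so that $f(\mu,x)\geq f(\mu,1)=2^{\mu}-\tfrac{\mu^{2}}{\mu+1}-1$; this is short and makes the constant $B$ appear automatically as the boundary value at $x=1$. You instead keep $F(x)=(1+x)^{\mu}-1-\tfrac{\mu^{2}}{\mu+1}x-Bx^{\mu}$ and run a boundary-value argument ($F(0)=F(1)=0$, $F'(0)>0$) through the sign pattern of $F''$, which costs you a case split at $\mu=2$ and the extra scalar inequality $B\geq 2^{\mu-2}$, i.e.\ $\tfrac34\cdot 2^{\mu}\geq 1+\tfrac{\mu^{2}}{\mu+1}$; your verification of that inequality (equality at $\mu=1$, derivative comparison $\tfrac32\ln 2>1>1-\tfrac{1}{(\mu+1)^{2}}$) is sound, and the convex-then-concave structure does force $F\geq 0$ in both cases. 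The only cosmetic slip is that at $\mu=2$ the quantity $\bigl((1+x)/x\bigr)^{\mu-2}$ is constant rather than ``strictly decreasing from $+\infty$,'' but that value of $\mu$ is already covered by your first branch, so nothing breaks. What your route buys is robustness and transparency about where equality occurs (both endpoints $x=0,1$); what the paper's route buys is brevity, at the price of an ``it is obvious that'' step that itself deserves the kind of justification you supplied for your auxiliary inequality. Your handling of the two rightmost inequalities via $x-x^{\mu}\geq 0$ and $\tfrac{\mu^{2}}{\mu+1}\geq\tfrac{\mu}{2}$ coincides with the paper's.
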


\begin{proof}
If $x=0$, then the inequality is trivial. Otherwise, let $f(\mu,x)=\frac{(1+x)^{\mu}-\frac{\mu^2}{\mu+1}x-1}{x^{\mu}}$, then, $\frac{\partial f}{\partial x}=\frac{\mu x^{\mu-1}[1+\frac{\mu(\mu-1)}{\mu+1}x-(1+x)^{\mu-1}]}{x^{2\mu}}$. When $\mu\geq1$ and $0\leq x\leq1$, it is obvious that $1+\frac{\mu(\mu-1)}{\mu+1}x\leq(1+x)^{\mu-1}$. Thus, $\frac{\partial f}{\partial x}\leq0$, and $f(\mu,x)$ is a decreasing function of $x$, i.e. $ f(\mu,x)\geq f(\mu,1)=2^{\mu}-\frac{\mu^2}{\mu+1}-1$. Consequently, we have $(1+x)^{\mu}\geq 1+\frac{\mu^2}{\mu+1}x+(2^{\mu}-\frac{\mu^2}{\mu+1}-1)x^{\mu}$.
Since $\frac{\mu^2}{\mu+1}\geq \frac{\mu}{2}$, for $0\leq x \leq 1$ and $\mu\geq1$, one gets $ 1+\frac{\mu^2}{\mu+1}x+(2^{\mu}-\frac{\mu^2}{\mu+1}-1)x^{\mu}=1+ \frac{\mu^2}{\mu+1}(x-x^\mu)+(2^{\mu}-1)x^{\mu}\geq 1+\frac{\mu}{2} x+(2^{\mu}-\frac{\mu}{2}-1)x^{\mu}=1+\frac{\mu}{2}(x-x^{\mu})+(2^{\mu}-1)x^{\mu}\geq 1+(2^{\mu}-1)x^{\mu}$.
\end{proof}

\begin{lemma} \label{lem2}
For any $2\leq q\leq 3$, ${\mu}\geq 1$, $g_q(x)$ defined on the domain $D=\{{(x,y)|0\leq x,y\leq 1}\}$, if $x\geq y$, then we have
\begin{equation} \label{inq-lemma2}
\begin{aligned}
g_q^{\mu}({x^2+y^2})& \geq g_q^{\mu}(x^2)+\frac{\mu^2}{\mu+1}g_q^{{\mu}-1}(x^2)g_q(y^2)+(2^{\mu}-\frac{\mu^2}{\mu+1}-1)g_q^{\mu}(y^2).
\end{aligned}
\end{equation}
\end{lemma}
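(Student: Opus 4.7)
The plan is to reduce the inequality to a one-variable problem by factoring out $g_q^{\mu}(x^2)$ and then invoking Lemma~\ref{lem1} applied to the ratio $g_q(y^2)/g_q(x^2)$. The hypotheses $2\le q\le 3$ and $x\ge y$ are exactly what is needed: the former gives subadditivity of $g_q$ on sums of squares via \eqref{g(x2+y2)}, while the latter, combined with monotonicity of $g_q$ on $[0,1]$, puts the auxiliary variable into the range $[0,1]$ where Lemma~\ref{lem1} is available.

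First I would dispose of the degenerate case $x=0$: then $y=0$ as well, and every term on both sides vanishes. Assuming $x>0$, monotonicity of $g_q$ gives $g_q(x^2)>0$, so I can set
\begin{equation*}
t:=\frac{g_q(y^2)}{g_q(x^2)}\in[0,1],
\end{equation*}
where membership in $[0,1]$ uses $y^2\le x^2\le 1$ together with the monotonicity of $g_q$.

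Next, starting from \eqref{g(x2+y2)} and raising both sides to the power $\mu\ge 1$ (which preserves the inequality since both sides are nonnegative), I obtain
\begin{equation*}
g_q^{\mu}(x^2+y^2)\ \ge\ \bigl[g_q(x^2)+g_q(y^2)\bigr]^{\mu}\ =\ g_q^{\mu}(x^2)\,(1+t)^{\mu}.
\end{equation*}
Now I apply the first inequality of Lemma~\ref{lem1} to $(1+t)^{\mu}$ and multiply through by $g_q^{\mu}(x^2)$:
\begin{equation*}
g_q^{\mu}(x^2)(1+t)^{\mu}\ \ge\ g_q^{\mu}(x^2)+\frac{\mu^2}{\mu+1}\,g_q^{\mu-1}(x^2)\,g_q(y^2)+\Bigl(2^{\mu}-\frac{\mu^2}{\mu+1}-1\Bigr)g_q^{\mu}(y^2),
\end{equation*}
which is precisely \eqref{inq-lemma2}.

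There is no real technical obstacle; the only point requiring attention is verifying that the ratio $t$ lies in $[0,1]$ so that Lemma~\ref{lem1} is applicable, and that the constant $2^{\mu}-\frac{\mu^2}{\mu+1}-1$ is handled correctly when multiplied by $g_q^{\mu}(x^2)\cdot t^{\mu}=g_q^{\mu}(y^2)$. Both are immediate from the monotonicity of $g_q$ and the hypothesis $x\ge y$, so the proof is essentially a clean substitution into Lemma~\ref{lem1}.
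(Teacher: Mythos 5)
Your proof is correct and follows essentially the same route as the paper: apply inequality (\ref{g(x2+y2)}), raise to the power $\mu$, factor out $g_q^{\mu}(x^2)$, and invoke Lemma \ref{lem1} on the ratio $g_q(y^2)/g_q(x^2)\in[0,1]$. Your treatment is in fact slightly more careful than the paper's, since you explicitly handle the $x=0$ case and justify that the ratio lies in $[0,1]$ via monotonicity of $g_q$ and $x\ge y$.
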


\begin{proof}
For $2\leq q\leq 3$, ${\mu}\geq 1$, according to  inequality (\ref{g(x2+y2)}),   we have
\begin{equation} \label{TSALLIS}
\begin{split}
g_q^{\mu}(x^2+y^2)&\geq (g_q(x^2)+g_q(y^2))^{\mu}\\
&=g_q^{\mu}(x^2)(1+\frac{g_q(y^2)}{g_q(x^2)})^{\mu}\\
&\geq g_q^{\mu}(x^2)+\frac{\mu^2}{\mu+1}g_q^{{\mu}-1}(x^2)g_q(y^2)+(2^{\mu}-\frac{\mu^2}{\mu+1}-1)g_q^{\mu}(y^2),
\end{split}
\end{equation}
where the first inequality is due to inequality (\ref{g(x2+y2)}) and the second inequality is due to Lemma \ref{lem1}.
\end{proof}

\begin{lemma} \label{lem3}
For  any N-qubit mixed state $\rho_{A|B_1\cdots B_{N-1}}$ , we have
\begin{equation} \label{inq-lemma3}
T_q{(\rho_{A|B_1\cdots B_{N-1}})}\geq g_q(C^2(\rho_{A|B_1\cdots B_{N-1}})).
\end{equation}
\end{lemma}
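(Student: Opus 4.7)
The plan is to reduce the mixed-state inequality to the pure-state equality $T_q(|\varphi\rangle_{AB}) = g_q(C^2(|\varphi\rangle_{AB}))$ via the convex-roof definition of $T_q$, and then use the convexity/monotonicity of $g_q$ together with the definition of $C$ as an infimum over pure-state decompositions. Viewing $B_1\cdots B_{N-1}$ as a single subsystem $B$, we treat $\rho_{A|B_1\cdots B_{N-1}}$ as a bipartite mixed state whose $A$ side is a single qubit, so $\rho_A$ has rank at most two for every pure state in any decomposition, and the analytical pure-state formula for $T_q$ applies term by term.

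Concretely, let $\{p_i,|\varphi_i\rangle\}$ be an optimal decomposition realizing the convex-roof minimum for $T_q(\rho_{A|B_1\cdots B_{N-1}})$. By the pure-state formula together with this optimality,
$$T_q(\rho_{A|B_1\cdots B_{N-1}}) = \sum_i p_i T_q(|\varphi_i\rangle) = \sum_i p_i g_q\bigl(C^2(|\varphi_i\rangle_{A|B_1\cdots B_{N-1}})\bigr).$$
Since $g_q$ is convex on $[0,1]$, Jensen's inequality gives
$$\sum_i p_i g_q\bigl(C^2(|\varphi_i\rangle)\bigr) \geq g_q\!\left(\sum_i p_i C^2(|\varphi_i\rangle)\right).$$
The convexity of $x\mapsto x^2$ yields $\sum_i p_i C^2(|\varphi_i\rangle) \geq \bigl(\sum_i p_i C(|\varphi_i\rangle)\bigr)^2$, and the convex-roof definition of the mixed-state concurrence forces $\sum_i p_i C(|\varphi_i\rangle) \geq C(\rho_{A|B_1\cdots B_{N-1}})$ for this particular decomposition. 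Invoking the monotonicity of $g_q$ then chains these estimates into the desired bound $T_q(\rho_{A|B_1\cdots B_{N-1}}) \geq g_q\bigl(C^2(\rho_{A|B_1\cdots B_{N-1}})\bigr)$.

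The only subtle point, rather than a deep obstacle, is legitimizing use of the pure-state formula: the analytical expression $T_q = g_q(C^2)$ is stated for bipartite pure states in the regime $\tfrac{5-\sqrt{13}}{2}\leq q\leq \tfrac{5+\sqrt{13}}{2}$, and it is applicable here because $A$ is a single qubit, which confines $\rho_A$ to rank at most two regardless of the dimensionality of $B_1\cdots B_{N-1}$. The range $2\leq q\leq 3$ needed in the sequel sits inside this interval. Beyond that, one must keep track of the direction of each inequality---Jensen for convex $g_q$, the squaring step, and the infimum defining $C$---all of which align, once composed with the monotonicity of $g_q$, to push $T_q(\rho_{A|B_1\cdots B_{N-1}})$ above $g_q\bigl(C^2(\rho_{A|B_1\cdots B_{N-1}})\bigr)$.
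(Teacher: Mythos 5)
Your proposal is correct and follows essentially the same route as the paper's proof: optimal convex-roof decomposition, the pure-state formula $T_q = g_q(C^2)$, Jensen's inequality for the convex $g_q$, the estimate $\sum_i p_i C^2 \geq \left(\sum_i p_i C\right)^2$ (which the paper obtains via Cauchy--Schwarz rather than convexity of $x^2$, an equivalent step), and the minimality of the mixed-state concurrence combined with monotonicity of $g_q$. Your added remark justifying the applicability of the pure-state formula when $A$ is a single qubit is a sensible clarification that the paper leaves implicit.
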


\begin{proof}
Suppose that $\rho_{A|B_1\cdots B_{N-1}}=\sum\limits_i p_i |\varphi_i\rangle _{A|B_1\cdots B_{N-1}}$ is the optimal decomposition for $T_q{(\rho_{A|B_1\cdots B_{N-1}})}$, then we have
\begin{equation}
\begin{split}
T_q{(\rho_{A|B_1\cdots B_{N-1}})} & = \sum_i p_i T_q( |\varphi_i\rangle _{A|B_1\cdots B_{N-1}}) \\
& = \sum_i p_i g_q(C^2(|\varphi_i\rangle _{A|B_1\cdots B_{N-1}}))\\
&\geq g_q (\sum_i p_i C^2(|\varphi_i\rangle _{A|B_1\cdots B_{N-1}}))\\
&\geq g_q ((\sum_i p_i C(|\varphi_i\rangle _{A|B_1\cdots B_{N-1}}))^2)\\
&\geq g_q(C^2(\rho_{A|B_1\cdots B_{N-1}})),
\end{split}
\end{equation}
where the first inequality is due to that $g_q(x)$ is a convex function, the second inequality is due to the Cauchy-Schwarz inequality: $(\sum\limits_{i}a_i^2)(\sum\limits_{i}b_i^2)\geq(\sum\limits_{i}a_ib_i)^2$, with $a_i=\sqrt{p_i}$, and $b_i=\sqrt{p_i}C(|\varphi_i\rangle _{A|B_1\cdots B_{N-1}})$, and the third inequality is due to the minimum property of $C(\rho_{A|B_1\cdots B_{N-1}})$.
\end{proof}

Now, we give the following theorems of the monogamy inequalities in terms of the Tsallis-$q$ entanglement.

\begin{theorem}\label{theorem1}
For any $2\leq q \leq 3$, the power $\eta \geq1$, and $N$-qubit mixed state  $\rho_{A|B_1\cdots B_{N-1}}$, if $C(\rho_{A|B_{i}})\geq C(\rho_{A|B_{i+1}\cdots B_{N-1}}), i=1,2,\cdots,N-2$,  $N>3$,  we have
\begin{equation}\label{inq-theorem1}
\begin{split}
T_{q}^{\eta}(\rho_{A|B_{1}\cdots B_{N-1}}) & \geq \sum_{i=1}^{N-3}h^{i-1}T_{q}^{\eta}(\rho_{A|B_i})
 +h^{N-3}Q_{AB_{N-2}},
\end{split}
\end{equation}
where $h=2^{\eta}-1$, $Q_{AB_{N-2}}=T_{q}^{\eta}(\rho_{A|B_{N-2}})+\frac{\eta^2}{\eta+1}T_{q}^{\eta-1}(\rho_{A| B_{N-2}})T_{q}(\rho_{A|B_{N-1}})+(2^{\eta}-\frac{\eta^2}{\eta+1}-1)T_{q}^{\eta}(\rho_{A|B_{N-1}})$.
\end{theorem}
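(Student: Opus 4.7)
The plan is to prove the inequality by peeling off one subsystem $B_i$ at a time from the joint state $\rho_{A|B_1\cdots B_{N-1}}$, using the weak form of Lemma \ref{lem2} for the first $N-3$ iterations and then applying the full three-term inequality from Lemma \ref{lem2} at the final step so that the sharper bound $Q_{AB_{N-2}}$ appears.

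First, I would apply Lemma \ref{lem3} together with the monotonicity of $g_q$ (raised to the $\eta$-th power) to pass from the Tsallis side to the concurrence side, obtaining
\[
T_q^{\eta}(\rho_{A|B_1\cdots B_{N-1}}) \geq g_q^{\eta}\bigl(C^2(\rho_{A|B_1\cdots B_{N-1}})\bigr).
\]
Then I would use the iterated concurrence monogamy (\ref{C}) and monotonicity of $g_q^{\eta}$ to get $g_q^{\eta}(C^2(\rho_{A|B_1\cdots B_{N-1}})) \geq g_q^{\eta}\bigl(C^2(\rho_{A|B_1}) + C^2(\rho_{A|B_2\cdots B_{N-1}})\bigr)$. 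The ordering hypothesis $C(\rho_{A|B_1}) \geq C(\rho_{A|B_2\cdots B_{N-1}})$ puts me in the regime of Lemma \ref{lem2}; moreover, since $x\geq y$ gives $g_q^{\eta-1}(x^2)g_q(y^2)\geq g_q^{\eta}(y^2)$, the three-term bound in Lemma \ref{lem2} collapses to the weak form $g_q^{\eta}(x^2+y^2)\geq g_q^{\eta}(x^2) + h\, g_q^{\eta}(y^2)$ with $h=2^{\eta}-1$. Because $\rho_{A|B_1}$ is a two-qubit state, $g_q(C^2(\rho_{A|B_1}))=T_q(\rho_{A|B_1})$, which produces the first term $T_q^{\eta}(\rho_{A|B_1})$ of the claimed bound.

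I would then iterate this weak-form peeling for $i=2,3,\ldots,N-3$. At each step, applying (\ref{C}) gives $C^2(\rho_{A|B_i\cdots B_{N-1}})\geq C^2(\rho_{A|B_i}) + C^2(\rho_{A|B_{i+1}\cdots B_{N-1}})$, the ordering hypothesis for index $i$ justifies the weak form of Lemma \ref{lem2}, and the two-qubit identity $T_q(\rho_{A|B_i})=g_q(C^2(\rho_{A|B_i}))$ produces the term $h^{i-1}T_q^{\eta}(\rho_{A|B_i})$ in the accumulated geometric series. After $N-3$ such peelings, the residual term is $h^{N-3} g_q^{\eta}(C^2(\rho_{A|B_{N-2}B_{N-1}}))$. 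At this point, applying the full three-term inequality from Lemma \ref{lem2} (without dropping the middle term) to the pair $(B_{N-2},B_{N-1})$ directly produces $h^{N-3}Q_{AB_{N-2}}$, completing the proof.

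The only nontrivial bookkeeping is to confirm that the ordering assumptions propagate correctly: iterating (\ref{C}) yields $C^2(\rho_{A|B_{i+1}\cdots B_{N-1}})\geq \sum_{j\geq i+1} C^2(\rho_{A|B_j})$, so the assumed chain $C(\rho_{A|B_i})\geq C(\rho_{A|B_{i+1}\cdots B_{N-1}})$ is precisely what feeds Lemma \ref{lem2} at every iteration. I do not expect any real obstacle beyond carefully tracking the geometric factors $h^{i-1}$ across the $N-3$ weak applications and reserving the strongest form of Lemma \ref{lem2} only for the final pair, which is what produces the improvement $Q_{AB_{N-2}}$ over the standard weak bound used in \cite{bib11}.
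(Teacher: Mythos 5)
Your proposal is correct and follows essentially the same route as the paper: pass to the concurrence side via Lemma \ref{lem3}, peel off $B_1,\ldots,B_{N-3}$ using the weak (two-term) consequence of Lemma \ref{lem2} under the ordering hypothesis, and reserve the full three-term inequality of Lemma \ref{lem2} for the final pair $(B_{N-2},B_{N-1})$ to obtain $Q_{AB_{N-2}}$. The justification you give for the collapse to the weak form, namely $g_q^{\eta-1}(x^2)g_q(y^2)\geq g_q^{\eta}(y^2)$ when $x\geq y$, is exactly the step the paper invokes.
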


\begin{proof}
Let $\rho_{A|B_1\cdots B_{N-1}}$ be an $N$-qubit mixed state, from Lemma \ref{lem3} and inequality (\ref{C}), we have
\begin{equation} \label{T1}
\begin{split}
  T_{q}^{\eta}(\rho_{A|B_{1}\cdots B_{N-1}}) & \geq
  g_{q}^{\eta}({C^{2}(\rho_{A|B_{1}})+C^{2}(\rho_{A|B_{2}\cdots B_{N-1}})})\\
  & \geq
  g_{q}^{\eta}(C^2(\rho_{A|B_{1}}))+\frac{\eta^2}{\eta+1} g_{q}^{\eta-1}(C^2(\rho_{A| B_{1}}))g_{q}(C^{2}(\rho_{A|B_{2}\cdots B_{N-1}}))\\
   & \quad+(2^{\eta}-\frac{\eta^2}{\eta+1}-1)g_{q}^{\eta}(C^{2}(\rho_{A|B_{2}\cdots B_{N-1}}))\\
    & \geq g_{q}^{\eta}(C^2(\rho_{A|B_{1}}))+hg_{q}^{\eta}(C^{2}(\rho_{A|B_{2}\cdots B_{N-1}})),
\end{split}
\end{equation}
where the first inequality is due to the monotonically increasing property of the function $g_q(x)$ and inequality (\ref{C}), the second inequality is due to  Lemma \ref{lem2}, and the third inequality is due to the fact that $C^2(\rho_{A|B_{1}})\geq C^2(\rho_{A|B_{2}\cdots B_{N-1}})$.

Similar calculation procedure can be used to the term $g_{q}^{\eta}(C^{2}(\rho_{A|B_{2}\cdots B_{N-1}}))$, by iterative method we can get
\begin{equation} \label{T2}
\begin{split}
& g_{q}^{\eta}(C^{2}(\rho_{A|B_{2}\cdots B_{N-1}}))\\
  \geq  & g_{q}^{\eta}(C^2(\rho_{A|B_{2}}))+hg_{q}^{\eta}({C^{2}(\rho_{A|B_{3}\cdots B_{N-1}})})\geq   \cdots\\
   \geq  &
  g_{q}^{\eta}(C^2(\rho_{A|B_{2}}))+hg_{q}^{\eta}(C^2(\rho_{A|B_{3}}))
  +\cdots+h^{N-5}g_{q}^{\eta}(C^2(\rho_{A|B_{N-3}}))\\
  &\quad+h^{N-4}\bigg\{g_{q}^{\eta}(C^2(\rho_{A|B_{N-2}}))
  +\frac{\eta^2}{\eta+1}g_{q}^{\eta-1}(C^2(\rho_{A|B_{N-2}}))g_{q}(C^2(\rho_{A|B_{N-1}}))\\
  &\quad+(2^{\eta}-\frac{\eta^2}{\eta+1}-1)g_{q}^{\eta}(C^2(\rho_{A|B_{N-1}}))\bigg\}.
\end{split}
\end{equation}
According to the fact $T_q(\rho)=g_q(C^2(\rho))$ for any two qubit  mixed state $\rho$, and combining inequality (\ref{T1}) and (\ref{T2}), we complete the proof.
\end{proof}

\begin{theorem}\label{theorem2}
For any $2\leq q \leq 3$, the power $\eta \geq1$, and $N$-qubit mixed state  $\rho_{A|B_1\cdots B_{N-1}}$,  if $C(\rho_{A|B_{i}})\geq C(\rho_{A|B_{i+1}\cdots B_{N-1}}), i=1, 2, \cdots,m$, $C(\rho_{A|B_{j}})\leq C(\rho_{A|B_{j+1}\cdots B_{N-1}}), j=m+1, m+2, \cdots N-2$, $N>3$, we have
\begin{equation} \label{inq-theorem2}
\begin{split}
 T_{q}^{\eta}(\rho_{A|B_{1}\cdots B_{N-1}}) & \geq \sum_{i=1}^{m}h^{i-1}T_{q}^{\eta}(\rho_{AB_i})+h^{m+1}\sum_{j=m+1}^{N-3}T_{q}^{\eta}(\rho_{AB_j})+h^m{{Q}_{AB_{N-1}}},
\end{split}
\end{equation}
where $h=2^{\eta}-1$, ${{Q}_{AB_{N-1}}}=T_{q}^{\eta}(\rho_{A|B_{N-1}})+\frac{\eta^2}{\eta+1}T_{q}^{\eta-1}(\rho_{A| B_{N-1}})T_{q}(\rho_{A|B_{N-2}})
+(2^{\eta}-\frac{\eta^2}{\eta+1}-1)T_{q}^{\eta}(\rho_{A|B_{N-2}})$.
\end{theorem}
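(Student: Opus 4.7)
The plan is to mimic the telescoping argument of Theorem \ref{theorem1} but to split it into two phases dictated by the two orderings of concurrences in the hypothesis. Starting from Lemma \ref{lem3} together with inequality (\ref{C}) and the monotonicity of $g_q$, I would first write
\begin{equation*}
T_q^\eta(\rho_{A|B_1\cdots B_{N-1}}) \ge g_q^\eta\bigl(C^2(\rho_{A|B_1})+C^2(\rho_{A|B_2\cdots B_{N-1}})\bigr).
\end{equation*}
Throughout I will use the short one-step bound $g_q^\eta(X^2+Y^2)\ge g_q^\eta(X^2)+h\,g_q^\eta(Y^2)$, valid whenever $X\ge Y$, which is the simpler form of Lemma \ref{lem2} already exploited in the proof of Theorem \ref{theorem1}; the full three-term form of Lemma \ref{lem2} is kept in reserve for the final bond $(B_{N-2},B_{N-1})$ in order to manufacture the quantity $Q_{AB_{N-1}}$.

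In phase one, for $i=1,\ldots,m$ the head concurrence dominates, so I apply the short one-step bound with $X=C(\rho_{A|B_i})$ and $Y=C(\rho_{A|B_{i+1}\cdots B_{N-1}})$. Each iteration extracts the term $g_q^\eta(C^2(\rho_{A|B_i}))$ and leaves the remaining tail with an additional factor $h$, so that after $m$ rounds
\begin{equation*}
T_q^\eta(\rho_{A|B_1\cdots B_{N-1}}) \ge \sum_{i=1}^{m} h^{i-1}\,g_q^\eta(C^2(\rho_{A|B_i})) + h^{m}\,g_q^\eta(C^2(\rho_{A|B_{m+1}\cdots B_{N-1}})).
\end{equation*}

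In phase two the ordering flips, so for $j=m+1,\ldots,N-3$ I apply the same one-step bound but with the roles swapped, taking $X=C(\rho_{A|B_{j+1}\cdots B_{N-1}})$ (now the larger quantity) and $Y=C(\rho_{A|B_j})$; each step then extracts $h\,g_q^\eta(C^2(\rho_{A|B_j}))$ while leaving the surviving tail with exactly the coefficient it had going in. Because that surviving tail already carries $h^m$ inherited from phase one, every head term for $j\in\{m+1,\ldots,N-3\}$ picks up the single uniform coefficient $h^{m+1}$. At the very last bond the hypothesis yields $C(\rho_{A|B_{N-2}})\le C(\rho_{A|B_{N-1}})$, and I would invoke the full three-term form of Lemma \ref{lem2} with $x=C(\rho_{A|B_{N-1}})$ and $y=C(\rho_{A|B_{N-2}})$ to produce precisely $Q_{AB_{N-1}}$, still multiplied by $h^m$. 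A final application of the two-qubit identity $T_q(\rho)=g_q(C^2(\rho))$ converts the remaining $g_q^\eta$ expressions into the $T_q^\eta$ expressions of (\ref{inq-theorem2}). The main obstacle is the bookkeeping, specifically verifying that every phase-two head term collapses to the single coefficient $h^{m+1}$ rather than growing as a geometric sequence, and that $Q_{AB_{N-1}}$ correctly carries coefficient $h^m$ and not $h^{m+1}$; the analytic content is otherwise inherited from Theorem \ref{theorem1} and Lemma \ref{lem2}.
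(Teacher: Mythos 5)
Your proposal is correct and follows essentially the same route as the paper's own proof: a two-phase telescoping in which the first $m$ bonds use the reduced two-term bound $g_q^{\eta}(X^2+Y^2)\ge g_q^{\eta}(X^2)+h\,g_q^{\eta}(Y^2)$, the reversed ordering in phase two swaps the roles of head and tail so that each extracted term carries the uniform factor $h^{m+1}$, and the full three-term form of Lemma \ref{lem2} is reserved for the last bond to produce $Q_{AB_{N-1}}$ with coefficient $h^{m}$. Your bookkeeping of the coefficients matches the paper's inequality chain exactly.
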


\begin{proof}
For $2\leq q\leq3$, $\eta\geq1$,  we obtain
\begin{equation} \label{t23}
\begin{split}
& T_{q}^{\eta}(\rho_{A|B_{1}\cdots B_{N-1}}) \\
\geq & \sum_{i=1}^{m}h^{i-1}T_{q}^{\eta}(\rho_{AB_i})+h^{m}g_{q}^{\eta}({C^{2}(\rho_{A|B_{m+1}\cdots B_{N-1}})})\\
 \geq & \sum_{i=1}^{m}h^{i-1}T_{q}^{\eta}(\rho_{AB_i})+h^{m}\bigg\{ g_{q}^{\eta}(C^{2}(\rho_{A|B_{m+2}\cdots B_{N-1}}))+(2^{\eta}-\frac{\eta^2}{\eta+1}-1)g_{q}^{\eta}(C^2(\rho_{A|B_{m+1}})) \\
& +\frac{\eta^2}{\eta+1}g_{q}^{\eta-1}({C^{2}(\rho_{A|B_{m+2}\cdots B_{N-1}})})g_{q}(C^2(\rho_{A|B_{m+1}}))\bigg\}\\
\geq & \sum_{i=1}^{m}h^{i-1}T_{q}^{\eta}(\rho_{AB_i})+h^{m+1}g_{q}^{\eta}(C^2(\rho_{A|B_{m+1}}))+h^{m}g_{q}^{\eta}(C^{2}(\rho_{A|B_{m+2}\cdots B_{N-1}}))\geq \cdots\\
  \geq & \sum_{i=1}^{m}h^{i-1}T_{q}^{\eta}(\rho_{AB_i})+h^{m+1}\sum_{j=m+1}^{N-3}g_{q}^{\eta}(C^2(\rho_{AB_j}))+h^m\bigg\{
  (2^{\eta}-\frac{\eta^2}{\eta+1}-1)g_{q}^{\eta}(C^2(\rho_{A|B_{N-2}}))\\
& \quad+g_{q}^{\eta}(C^2(\rho_{A| B_{N-1}}))+\frac{\eta^2}{\eta+1}g_{q}^{\eta-1}(C^2(\rho_{A| B_{N-1}}))g_{q}(C^2(\rho_{A|B_{N-2}}))\bigg\},
\end{split}
\end{equation}
where the first inequality  is due to Theorem \ref{theorem1}, and the second inequality is due to Lemma \ref{lem2} and the fact that $C(\rho_{A| B_{j}})\leq C(\rho_{A|B_{j+1}\cdots B_{N-1}})$ for
 $j=m+1, m+2, \cdots N-2$, $N>3$. According to the denotation of $Q_{AB_{N-1}}$ and combining inequality (\ref{t23}), we obtain inequality (\ref{inq-theorem2}).
\end{proof}

\begin{remark} We consider a particular case of $N=3$. Note that when $2\leq q\leq3$, the power $\eta\geq1$, if $T_q(\rho_{AB_1})\geq T_q(\rho_{AB_2})$, then we get the following result,
\begin{equation}
\begin{split}
T_q^{\eta}(\rho_{A|B_1B_2}) &\geq T_q^{\eta}(\rho_{AB_1})+\frac{\eta^2}{\eta+1}T_q^{\eta-1}(\rho_{AB_1})T_q(\rho_{AB_2})\\
& \quad+(2^{\eta}-\frac{\eta^2}{\eta+1}-1)T_q^{\eta}(\rho_{AB_2}).
\end{split}
\end{equation}
If $T_q(\rho_{AB_1})\leq T_q(\rho_{AB_2})$, then
\begin{equation}
\begin{split}
T_q^{\eta}(\rho_{A|B_1B_2}) &\geq T_q^{\eta}(\rho_{AB_2})+\frac{\eta^2}{\eta+1}T_q^{\eta-1}(\rho_{AB_2})T_q(\rho_{AB_1})\\
& \quad+(2^{\eta}-\frac{\eta^2}{\eta+1}-1)T_q^{\eta}(\rho_{AB_1}).
\end{split}
\end{equation}
\end{remark}

To see the tightness of the Tsallis-$q$ entanglement directly, we give the following example.

\begin{example}
Under local unitary operations, the three-qubit pure state can be written as \cite{bib28,bib29}
\begin{equation} \label{example}
|\psi\rangle_{A|BC}=\lambda_0|000\rangle+\lambda_1e^{i{\varphi}}|100\rangle+\lambda_2|101\rangle
+\lambda_3|110\rangle+\lambda_4|111\rangle,
\end{equation}
where $0\leq \varphi\leq \pi$, $\lambda_i\geq0, i=0,1,2,3,4$, and $\sum_{i=0}^{4}\lambda_i^2 =1$, set $\lambda_0=\frac {\sqrt{5}}{3}$, $\lambda_1=0$, $\lambda_4=0$, $\lambda_2=\frac{\sqrt{3}}{3}$, $\lambda_3=\frac{1}{3}$, $q=2$. From the definition of the Tsallis-$q$  entanglement,
after simple computation, we can get $T_q(\rho_{A| BC})=g_q[(2\lambda_0\sqrt{(\lambda_2)^2+(\lambda_3)^2+(\lambda_4)^2})^2]$, $T_q(\rho_{AB})=g_q[(2{\lambda_0}{\lambda_2})^2]$, and $T_q(\rho_{AC})=g_q[(2{\lambda_0}{\lambda_3})^2]$, then we have $T_q(\rho_{A|BC})=0.49383$, $T_q(\rho_{AB})=0.37037$, and $T_q(\rho_{AC})=0.12346$. Consequently, $T_2^{\eta}(\rho_{A|BC})=(0.49383)^{\eta}\geq T_2^{\eta}(\rho_{AB})+(2^{\eta}-\frac{\eta^2}{\eta+1}-1)T_2^{\eta}(\rho_{AC})+\frac{\eta^2}{\eta+1}T_2^{\eta-1}(\rho_{AB})T_2(\rho_{AC})=
(0.37037)^{\eta}+\frac{\eta^2}{\eta+1}(37037)^{\eta-1}(0.12346)+(2^{\eta}-\frac{\eta^2}{\eta+1}-1)(0.12346)^{\eta}$. While the result in \cite{bib11} is $T_2^{\eta}(\rho_{AB})+(2^{\eta}-1)T_2^{\eta}(\rho_{AC})+\frac{\eta}{2}T_2(\rho_{AC})((T_2^{\eta-1}(\rho_{AB}))-(T_2^{\eta-1}(\rho_{AC})))=(0.37037)^{\eta}+(2^{\eta}-1)(0.12346)^{\eta}+\frac{0.12346{\eta}}{2}((0.37037)^{\eta-1}-(0.12346)^{\eta-1})$.
One can see that our result is  tighter than the ones \cite{bib11} for ${\eta}\geq1$. See FIG. 1.
\begin{figure}[h]
  \centering
\includegraphics[width=11cm,height=5.5cm]{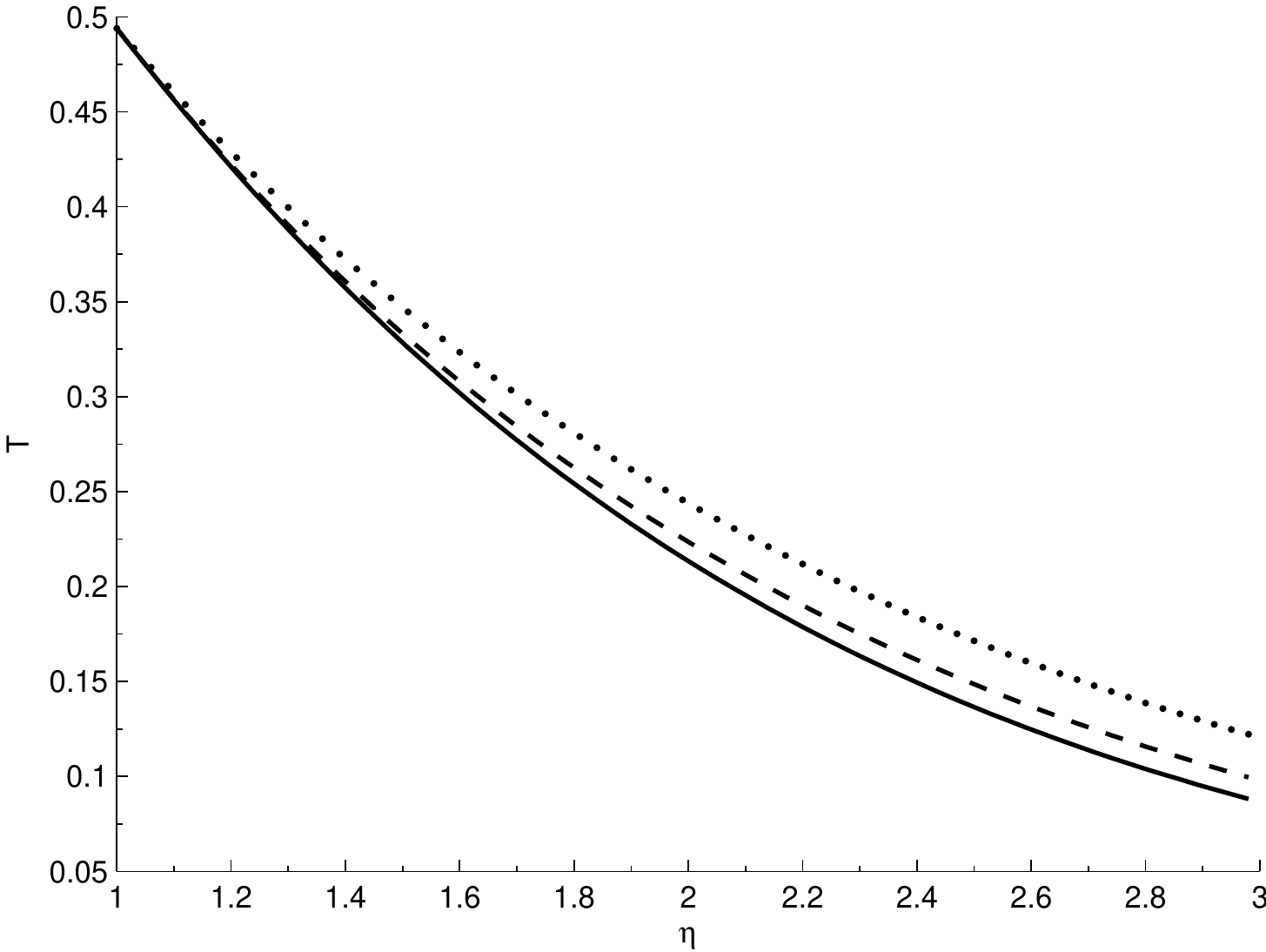}
\caption{The axis T stands the Tsallis-$q$ entanglement of $|\psi\rangle_{A|BC}$, which is a function of $\eta$ $(1\leq \eta\leq3)$. The dotted  line stands the value of $T_2^{\eta}(\rho_{A|BC})$. The dashed line stands the lower bound given by our improved monogamy relations. The  solid black line represents the lower bound  given by  \cite{bib11}.}
\label{1}
\end{figure}
\end{example}

\section{Tighter monogamy relations in terms of the R\'{e}nyi-${\alpha}$  entanglement}\label{sec4}

In order to present the tighter monogamy relations of the R\'{e}nyi-${\alpha}$  entanglement in multiqubit systems, we introduce three lemmas as follows.

\begin{lemma}\label{lem4}
For any N-qubit mixed state $\rho_{A|B_1\cdots B_{N-1}}$, we have
\begin{equation}\label{Exingzhi}
E_{\alpha}{(\rho_{A|B_1\cdots B_{N-1}})}\geq f_{\alpha}(C(\rho_{A|B_1\cdots B_{N-1}})).
\end{equation}
\end{lemma}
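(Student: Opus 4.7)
The plan is to mirror the proof of Lemma \ref{lem3}, with one simplification: since $E_\alpha$ on a two-qubit state equals $f_\alpha$ evaluated at the concurrence $C$ itself (not at $C^2$ as in the Tsallis case), no Cauchy--Schwarz step will be required.

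First I would fix an optimal pure-state decomposition $\rho_{A|B_1\cdots B_{N-1}}=\sum_i p_i|\varphi_i\rangle_{A|B_1\cdots B_{N-1}}\langle\varphi_i|$ realizing the convex-roof minimum in the definition of $E_\alpha(\rho_{A|B_1\cdots B_{N-1}})$. Because $A$ is a single qubit, the $A$-side reduction of every pure $|\varphi_i\rangle$ has rank at most $2$, so the algebraic derivation of the analytical formula applies verbatim and yields $E_\alpha(|\varphi_i\rangle_{A|B_1\cdots B_{N-1}})=f_\alpha(C(|\varphi_i\rangle_{A|B_1\cdots B_{N-1}}))$ for each $i$. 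Thus the starting identity is $E_\alpha(\rho_{A|B_1\cdots B_{N-1}})=\sum_i p_i f_\alpha(C(|\varphi_i\rangle_{A|B_1\cdots B_{N-1}}))$.

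Next I would chain two standard estimates. Jensen's inequality for the convex function $f_\alpha$ gives $\sum_i p_i f_\alpha(C(|\varphi_i\rangle))\geq f_\alpha(\sum_i p_i C(|\varphi_i\rangle))$, and the monotonicity of $f_\alpha$ combined with the defining minimality $\sum_i p_i C(|\varphi_i\rangle)\geq C(\rho_{A|B_1\cdots B_{N-1}})$ then yields $f_\alpha(\sum_i p_i C(|\varphi_i\rangle))\geq f_\alpha(C(\rho_{A|B_1\cdots B_{N-1}}))$. Combining these with the starting identity establishes the claim.

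I do not anticipate a substantive obstacle, since every ingredient---convexity and monotonicity of $f_\alpha$, the qubit analytical formula, and the minimality defining $C$---is already recorded in Section \ref{sec2}. The one point that deserves a moment's attention is the first step: one must check that the identity $E_\alpha(|\varphi\rangle)=f_\alpha(C(|\varphi\rangle))$, stated in \cite{bib25,bib26} for two-qubit states, extends to a pure state on $A|B_1\cdots B_{N-1}$ with single-qubit $A$. This is immediate from the Schmidt decomposition, which forces the $A$-side reduced state to be at most rank $2$, so the same substitution that produces $f_\alpha(C)$ from the qubit spectrum applies without change.
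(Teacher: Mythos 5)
Your proposal is correct and follows essentially the same route as the paper's proof: optimal convex-roof decomposition, the analytic formula $E_\alpha(|\varphi_i\rangle)=f_\alpha(C(|\varphi_i\rangle))$ for pure states with single-qubit $A$, Jensen's inequality from the convexity of $f_\alpha$, and then monotonicity together with the minimality in the definition of $C(\rho_{A|B_1\cdots B_{N-1}})$. Your added remark on why the two-qubit formula extends to the $2\otimes d$ pure-state case is a justification the paper leaves implicit, but it does not change the argument.
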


\begin{proof}
Suppose that $\rho_{A|B_1\cdots B_{N-1}}=\sum\limits_i p_i |\varphi_i\rangle _{A|B_1\cdots B_{N-1}}$ is the optimal decomposition for $ E_{\alpha}({\rho_{A|B_1\cdots B_{N-1}}})$, then we have
\begin{equation}
\begin{split}
E_{\alpha}({\rho_{A|B_1\cdots B_{N-1}}}) & = \sum_i p_i E_{\alpha}( |\varphi_i\rangle _{A|B_1\cdots B_{N-1}}) \\
& = \sum_i p_i f_{\alpha}(C(|\varphi_i\rangle _{A|B_1\cdots B_{N-1}}))\\
&\geq f_{\alpha} (\sum_i p_i C(|\varphi_i\rangle _{A|B_1\cdots B_{N-1}}))\\
&\geq f_{\alpha}(C(\rho_{A|B_1\cdots B_{N-1}})),
\end{split}
\end{equation}
where the first inequality is due to the convexity of $f_{\alpha}(x)$ and the last inequality follows from the definition of concurrence for mixed state.
\end{proof}

\begin{lemma} \label{lemma5}
For any $\alpha\geq2$, ${\mu}\geq 1$, suppose that the function $f_{\alpha}(x)$ defined on the domain $D=\{{(x,y)|0\leq x,y\leq 1,0\leq x^2+y^2\leq 1}\}$, if $x\geq y$, then we have
\begin{equation} \label{inglemma5}
f_{\alpha}^{\mu}(\sqrt{x^2+y^2})\geq f_{\alpha}^{\mu}(x)+\frac{\mu^2}{\mu+1}f_{\alpha}^{{\mu}-1}(x)f_{\alpha}(y)+(2^{\mu}-\frac{\mu^2}{\mu+1}-1)f_{\alpha}^{\mu}(y).
\end{equation}
\end{lemma}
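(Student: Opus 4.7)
The plan is to mimic the proof of Lemma \ref{lem2}, but now using inequality (\ref{e1}) in place of inequality (\ref{g(x2+y2)}), since both play exactly the same structural role: a superadditivity statement that lets us separate a ``$\sqrt{x^2+y^2}$'' argument into contributions from $x$ and $y$.

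First I would dispose of the trivial case $x=0$. Since the hypothesis forces $y\leq x = 0$, so $y=0$, and since $f_\alpha(0)=0$, both sides of (\ref{inglemma5}) vanish. From here on I assume $x>0$, in which case $f_\alpha(x)>0$ because $f_\alpha$ is monotonically increasing with $f_\alpha(0)=0$.

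Next, by the monogamy-style inequality (\ref{e1}), valid for $\alpha\geq 2$, we have $f_\alpha(\sqrt{x^2+y^2})\geq f_\alpha(x)+f_\alpha(y)$. Raising both sides to the $\mu$-th power (which preserves the inequality since both sides are nonnegative and $\mu\geq 1$) and then factoring $f_\alpha^\mu(x)$ out, I obtain
\begin{equation*}
f_\alpha^\mu(\sqrt{x^2+y^2}) \;\geq\; \bigl(f_\alpha(x)+f_\alpha(y)\bigr)^\mu \;=\; f_\alpha^\mu(x)\left(1+\frac{f_\alpha(y)}{f_\alpha(x)}\right)^{\!\mu}.
\end{equation*}
The key observation is that, because $f_\alpha$ is monotonically increasing and $x\geq y\geq 0$, the ratio $t:=f_\alpha(y)/f_\alpha(x)$ satisfies $0\leq t\leq 1$. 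Hence Lemma \ref{lem1} applies with $x$ replaced by $t$, and gives $(1+t)^\mu \geq 1+\tfrac{\mu^2}{\mu+1}t+\bigl(2^\mu-\tfrac{\mu^2}{\mu+1}-1\bigr)t^\mu$. Substituting back and multiplying through by $f_\alpha^\mu(x)$ produces precisely the three terms on the right-hand side of (\ref{inglemma5}).

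I do not anticipate any serious obstacle: the entire argument is just a transplant of the Lemma \ref{lem2} proof, with $g_q(x^2)\mapsto f_\alpha(x)$ and $g_q(y^2)\mapsto f_\alpha(y)$. The only subtle point is that one must justify that the ratio $f_\alpha(y)/f_\alpha(x)$ lies in the unit interval so that Lemma \ref{lem1} can be invoked, which is exactly where the hypothesis $x\geq y$ together with the monotonicity of $f_\alpha$ is used. No use is made of the domain restriction $x^2+y^2\leq 1$ beyond ensuring $f_\alpha(\sqrt{x^2+y^2})$ is defined, so (\ref{e1}) supplies all the analytic input that is needed.
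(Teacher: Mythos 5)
Your proposal is correct and follows exactly the paper's own route: apply inequality (\ref{e1}), raise to the power $\mu$, factor out $f_{\alpha}^{\mu}(x)$, and invoke Lemma \ref{lem1} with the ratio $f_{\alpha}(y)/f_{\alpha}(x)\in[0,1]$. Your explicit treatment of the $x=0$ case and the justification that the ratio lies in the unit interval are minor refinements the paper leaves implicit.
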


\begin{proof}
For ${\mu}\geq 1$, and $\alpha\geq2$, we have
\begin{equation}
\begin{split}
f_{\alpha}^{\mu}(\sqrt{x^2+y^2})&\geq (f_{\alpha}(x)+f_{\alpha}(y))^{\mu}\\
&\geq f_{\alpha}^{\mu}(x)+\frac{\mu^2}{\mu+1}f_{\alpha}^{{\mu}-1}(x)f_{\alpha}(y)+(2^{\mu}-\frac{\mu^2}{\mu+1}-1)f_{\alpha}^{\mu}(y),
\end{split}
\end{equation}
where the first inequality is due to inequality (\ref{e1}), and the second inequality is due to Lemma \ref{lem1}.
\end{proof}
\begin{lemma} \label{lemma6}
For any $\frac{\sqrt{7-1}}{2} \leq \alpha < 2$, ${\mu}\geq1$, ${\mu}=\frac{\gamma}{2}$,  the function $f_{\alpha}(x)$ defined on the domain $D=\{{(x,y)|0\leq x,y\leq 1,0\leq x^2+y^2\leq 1}\}$, if $x\geq y$, then we have
\begin{equation} \label{6}
\begin{split}
f_{\alpha}^{\gamma}(\sqrt{x^2+y^2})&\geq f_{\alpha}^{\gamma}(x)+\frac{\mu^2}{\mu+1}f_{\alpha}^{{\gamma}-2}(x)f_{\alpha}^2(y)
+(2^{\mu}-\frac{\mu^2}{\mu+1}-1)f_{\alpha}^{\gamma}(y).
\end{split}
\end{equation}
\end{lemma}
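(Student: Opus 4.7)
The plan is to mirror the proof of Lemma \ref{lemma5} almost verbatim, but substitute inequality (\ref{e2}) for inequality (\ref{e1}), since in the regime $\frac{\sqrt{7}-1}{2}\leq\alpha<2$ it is the squared version of $f_{\alpha}$ that satisfies the additive inequality under $\sqrt{x^2+y^2}$. The identification $\mu=\gamma/2$ is precisely the bookkeeping needed to absorb the square: writing $f_{\alpha}^{\gamma}(\sqrt{x^2+y^2})=\bigl[f_{\alpha}^{2}(\sqrt{x^2+y^2})\bigr]^{\mu}$ converts the $\gamma$-power statement into a $\mu$-power statement about $f_{\alpha}^{2}$, which is exactly what (\ref{e2}) feeds into.

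First, I would apply (\ref{e2}) to get
\begin{equation*}
f_{\alpha}^{\gamma}(\sqrt{x^2+y^2})=\bigl[f_{\alpha}^{2}(\sqrt{x^2+y^2})\bigr]^{\mu}\geq\bigl[f_{\alpha}^{2}(x)+f_{\alpha}^{2}(y)\bigr]^{\mu}.
\end{equation*}
Next, since $f_{\alpha}$ is monotonically increasing and $x\geq y$, we have $f_{\alpha}(x)\geq f_{\alpha}(y)\geq 0$, so the ratio $t:=f_{\alpha}^{2}(y)/f_{\alpha}^{2}(x)$ lies in $[0,1]$ whenever $f_{\alpha}(x)>0$. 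Factoring $f_{\alpha}^{2\mu}(x)=f_{\alpha}^{\gamma}(x)$ out and invoking Lemma \ref{lem1} with this $t$ and power $\mu\geq1$ yields
\begin{equation*}
\bigl[f_{\alpha}^{2}(x)+f_{\alpha}^{2}(y)\bigr]^{\mu}=f_{\alpha}^{\gamma}(x)(1+t)^{\mu}\geq f_{\alpha}^{\gamma}(x)\Bigl[1+\tfrac{\mu^2}{\mu+1}t+\bigl(2^{\mu}-\tfrac{\mu^2}{\mu+1}-1\bigr)t^{\mu}\Bigr].
\end{equation*}
Distributing $f_{\alpha}^{\gamma}(x)$ and using $f_{\alpha}^{\gamma}(x)\cdot t=f_{\alpha}^{\gamma-2}(x)f_{\alpha}^{2}(y)$ together with $f_{\alpha}^{\gamma}(x)\cdot t^{\mu}=f_{\alpha}^{2\mu}(y)=f_{\alpha}^{\gamma}(y)$ gives exactly the claimed inequality (\ref{6}). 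The degenerate case $f_{\alpha}(x)=0$ forces $f_{\alpha}(y)=0$, and both sides vanish, so it is handled trivially.

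I do not anticipate a substantial obstacle: the only subtlety is verifying that $t\in[0,1]$ so that Lemma \ref{lem1} may be applied, and this follows immediately from the hypothesis $x\geq y$ combined with the monotonicity of $f_{\alpha}$ stated in Section \ref{sec2}. The structural parallel with Lemma \ref{lemma5} means no new analytic machinery is required; the proof is essentially a one-line composition of (\ref{e2}) with Lemma \ref{lem1}, with the exponent identification $\gamma=2\mu$ doing all of the conceptual work.
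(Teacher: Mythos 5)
Your proposal is correct and follows exactly the paper's own argument: apply inequality (\ref{e2}) to bound $f_{\alpha}^{\gamma}(\sqrt{x^2+y^2})$ below by $\bigl[f_{\alpha}^{2}(x)+f_{\alpha}^{2}(y)\bigr]^{\mu}$, then factor out $f_{\alpha}^{\gamma}(x)$ and invoke Lemma \ref{lem1}. Your explicit check that $t=f_{\alpha}^{2}(y)/f_{\alpha}^{2}(x)\in[0,1]$ and your treatment of the degenerate case are details the paper leaves implicit, but the route is identical.
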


\begin{proof}
For ${\mu}\geq 1$, and ${\mu=\frac{\gamma}{2}}$, we have
\begin{equation}
\begin{split}
f_{\alpha}^{\gamma}(\sqrt{x^2+y^2})&\geq (f_{\alpha}^2(x)+f_{\alpha}^2(y))^{\mu}\\
& \geq f_{\alpha}^{\gamma}(x)+\frac{\mu^2}{\mu+1}f_{\alpha}^{{\gamma}-2}(x)f_{\alpha}^2(y)+(2^{\mu}-\frac{\mu^2}{\mu+1}-1)f_{\alpha}^{\gamma}(y),
\end{split}
\end{equation}
where the first inequality can be assured by inequality (\ref{e2}), and the second inequality is due to Lemma \ref{lem1}.
\end{proof}

Now, we give the following theorems of the tighter monogamy inequality in terms of the R\'{e}nyi-${\alpha}$ entanglement.

\begin{theorem} \label{theorem3}
For any ${\alpha}\geq2 $, the power $\mu \geq1$, and $N$-qubit mixed state  $\rho_{A|B_1\cdots B_{N-1}}$, if $C(\rho_{A|B_{i}})\geq C(\rho_{A|B_{i+1}\cdots B_{N-1}}), i=1,2,\cdots,N-2$, $N>3$, then we have
\begin{equation} \label{inq-theorem3}
\begin{split}
E_{\alpha}^{\mu}(\rho_{A|B_{1}\cdots B_{N-1}})  & \geq \sum_{i=1}^{N-3}h^{i-1}E_{\alpha}^{\mu}(\rho_{A|B_i})
 +h^{N-3}Q_{AB_{N-2}},
\end{split}
\end{equation}
where $h=2^{\mu}-1$, $Q_{AB_{N-2}}=E_{\alpha}^{\mu}(\rho_{A|B_{N-2}})+\frac{\mu^2}{\mu+1}E_{\alpha}^{\mu-1}(\rho_{A| B_{N-2}})E_{\alpha}(\rho_{A|B_{N-1}})
+(2^{\mu}-\frac{\mu^2}{\mu+1}-1)E_{\alpha}^{\mu}(\rho_{A|B_{N-1}})$.
\end{theorem}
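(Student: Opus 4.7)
The plan is to run a proof structurally parallel to Theorem \ref{theorem1}, with the R\'enyi-$\alpha$ ingredients (Lemma \ref{lem4} and Lemma \ref{lemma5}) replacing their Tsallis counterparts (Lemma \ref{lem3} and Lemma \ref{lem2}). The endpoint identity that closes the chain back to $E_\alpha$ is the two-qubit formula $E_\alpha(\rho) = f_\alpha(C(\rho))$, which is legal since $\alpha \geq 2 \geq \frac{\sqrt 7 - 1}{2}$.

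First I would apply Lemma \ref{lem4} and raise both sides to the $\mu$-th power to obtain $E_\alpha^\mu(\rho_{A|B_1\cdots B_{N-1}}) \geq f_\alpha^\mu(C(\rho_{A|B_1\cdots B_{N-1}}))$. Taking square roots in inequality (\ref{C}) and using the monotonicity of $f_\alpha$ then yields $f_\alpha^\mu(C(\rho_{A|B_1\cdots B_{N-1}})) \geq f_\alpha^\mu(\sqrt{C^2(\rho_{A|B_1}) + C^2(\rho_{A|B_2\cdots B_{N-1}})})$. Next I would invoke Lemma \ref{lemma5} with $x = C(\rho_{A|B_1})$ and $y = C(\rho_{A|B_2\cdots B_{N-1}})$; the hypothesis $C(\rho_{A|B_1}) \geq C(\rho_{A|B_2\cdots B_{N-1}})$ validates the $x \geq y$ precondition.

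To pass from the three-term lower bound produced by Lemma \ref{lemma5} to the cleaner two-term expression $f_\alpha^\mu(C(\rho_{A|B_1})) + h\,f_\alpha^\mu(C(\rho_{A|B_2\cdots B_{N-1}}))$, I would replace $f_\alpha^{\mu-1}(C(\rho_{A|B_1}))$ by $f_\alpha^{\mu-1}(C(\rho_{A|B_2\cdots B_{N-1}}))$ in the middle term (using $\mu \geq 1$, monotonicity of $f_\alpha$, and the same hypothesis), and then coalesce its coefficient with that of the third term to get $\frac{\mu^2}{\mu+1} + (2^\mu - \frac{\mu^2}{\mu+1} - 1) = 2^\mu - 1 = h$. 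I would then iterate this exact manoeuvre on $f_\alpha^\mu(C(\rho_{A|B_2\cdots B_{N-1}}))$ for indices $i=2,3,\dots,N-3$, at each stage using the hypothesis $C(\rho_{A|B_i}) \geq C(\rho_{A|B_{i+1}\cdots B_{N-1}})$ to justify both the application of Lemma \ref{lemma5} and the middle-term collapse.

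At the final step, i.e. for the pair $(B_{N-2}, B_{N-1})$, I would refrain from collapsing and keep the full three-term output of Lemma \ref{lemma5}. After translating $f_\alpha(C(\rho_{AB_i})) = E_\alpha(\rho_{AB_i})$ for each two-qubit reduction, these surviving three terms assemble precisely into $Q_{AB_{N-2}}$ as defined in the statement, and the cumulative $h^{i-1}$ coefficients reproduce the geometric chain $\sum_{i=1}^{N-3} h^{i-1} E_\alpha^\mu(\rho_{A|B_i}) + h^{N-3} Q_{AB_{N-2}}$. The main obstacle is not any single inequality but the bookkeeping of the iteration: one must check at every stage that the assumed ordering $C(\rho_{A|B_i}) \geq C(\rho_{A|B_{i+1}\cdots B_{N-1}})$ is exactly the condition that licenses both Lemma \ref{lemma5} and the $f_\alpha^{\mu-1}(x) \geq f_\alpha^{\mu-1}(y)$ step, and that the last iteration is stopped one step early so the sharper $Q$-term survives. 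With this bookkeeping the argument mirrors inequalities (\ref{T1})--(\ref{T2}) in the proof of Theorem \ref{theorem1} verbatim.
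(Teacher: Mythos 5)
Your proposal is correct and follows essentially the same route as the paper: Lemma \ref{lem4} plus monotonicity of $f_\alpha$ and inequality (\ref{C}) to reach $f_\alpha^{\mu}(\sqrt{C^2(\rho_{A|B_1})+C^2(\rho_{A|B_2\cdots B_{N-1}})})$, then Lemma \ref{lemma5}, the collapse of the three-term bound to $h\,f_\alpha^{\mu}$ via the ordering hypothesis, iteration, and retention of the full three-term output at the last step to form $Q_{AB_{N-2}}$. The middle-term replacement argument you spell out is exactly the justification the paper leaves implicit for its third inequality, so no gap remains.
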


\begin{proof}
We consider an $N$-qubit mixed state $\rho_{A|B_1\cdots B_{N-1}}$, from Lemma \ref{lem4}, we have
\begin{equation}
\begin{split}
  E_{\alpha}^{\mu}(\rho_{A|B_{1}\cdots B_{N-1}}) & \geq
  f_{\alpha}^{\mu}\bigg(\sqrt{C^{2}(\rho_{A|B_{1}})+C^{2}(\rho_{A|B_{2}\cdots B_{N-1}})}\bigg)\\
  & \geq
  f_{\alpha}^{\mu}(C(\rho_{A|B_{1}}))+\frac{\mu^2}{\mu+1} f_{\alpha}^{\mu-1}(C(\rho_{A| B_{1}}))f_{\alpha}(C(\rho_{A|B_{2}\cdots B_{N-1}}))\\
   & \quad+(2^{\mu}-\frac{\mu^2}{\mu+1}-1)f_{\alpha}^{\mu}(C(\rho_{A| B_{2}\cdots B_{N-1}}))\\
   & \geq
   f_{\alpha}^{\mu}(C(\rho_{A|B_{1}}))+hf_{\alpha}^{\mu}(C(\rho_{A| B_{2}\cdots B_{N-1}}))\\
  &  \geq
 \cdots\\
  & \geq
  f_{\alpha}^{\mu}(C(\rho_{A|B_{1}}))+hf_{\alpha}^{\mu}(C(\rho_{A|B_{2}}))+\cdots+h^{N-4}f_{\alpha}^{\mu}(C(\rho_{A|B_{N-3}}))\\
  & \quad+h^{N-3}\bigg\{f_{\alpha}^{\mu}(C(\rho_{A|B_{N-2}}))
  +\frac{\mu^2}{\mu+1}f_{\alpha}^{\mu-1}(C(\rho_{A|B_{N-2}}))f_{\alpha}(C(\rho_{A|B_{N-1}}))\\
  & \quad+(2^{\mu}-\frac{\mu^2}{\mu+1}-1)f_{\alpha}^{\mu}(C(\rho_{A|B_{N-1}}))\bigg\},
\end{split}
\end{equation}
where the first inequality is due to the monotonically increasing property of the function $f_{\alpha}(x)$ and inequality (\ref{C}), the second inequality  is due to Lemma \ref{lemma5}, and the third inequality is due to the fact that $C(\rho_{A|B_{i}})\geq C(\rho_{A|B_{i+1}\cdots B_{N-1}})$, $i=1,2,\cdots,N-2$. Then, according to the denotation of $Q_{AB_{N-2}}$ and the definition of the R\'{e}nyi-$\alpha$  entanglement, we complete the proof.
\end{proof}

\begin{theorem} \label{theorem4}
For any $\alpha\geq2$, the power $\mu\geq1$, and $N$-qubit mixed state  $\rho_{A|B_1\cdots B_{N-1}}$, if $C(\rho_{A|B_{i}})\geq C(\rho_{A|B_{i+1}\cdots B_{N-1}})$, $i=1,2,\cdots,m$, $C(\rho_{A|B_{j}})\leq C(\rho_{A|B_{j+1}\cdots B_{N-1}})$, $j=m+1,m+2,\cdots N-2$, $N>3$, then we have
\begin{equation}\label{ingtheorem4}
\begin{split}
 E_{\alpha}^{\mu}(\rho_{A|B_{1}\cdots B_{N-1}}) & \geq \sum_{i=1}^{m}h^{i-1}E_{\alpha}^{\mu}(\rho_{AB_i})+h^{m+1}\sum_{j=m+1}^{N-3}E_{\alpha}^{\mu}(\rho_{AB_j})+h^m{{Q}_{AB_{N-1}}},
\end{split}
\end{equation}
where $h=2^{\mu}-1$, ${{Q}_{AB_{N-1}}}=E_{\alpha}^{\mu}(\rho_{A|B_{N-1}})+\frac{{\mu}^2}{\mu+1}E_{\alpha}^{\mu-1}(\rho_{A| B_{N-1}})E_{\alpha}(\rho_{A|B_{N-2}})
+(2^{\mu}-\frac{{\mu}^2}{\mu+1}-1)E_{\alpha}^{\mu}(\rho_{A|B_{N-2}})$.
\end{theorem}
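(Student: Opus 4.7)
The plan is to mirror the two-phase argument used for Theorem~\ref{theorem2}, substituting Lemma~\ref{lemma5} for Lemma~\ref{lem2} and Lemma~\ref{lem4} for Lemma~\ref{lem3}. First I would apply Lemma~\ref{lem4} to lower-bound $E_\alpha^\mu(\rho_{A|B_1\cdots B_{N-1}})$ by $f_\alpha^\mu(C(\rho_{A|B_1\cdots B_{N-1}}))$, then exploit the monotonicity of $f_\alpha$ together with inequality~(\ref{C}) to insert the split $C^2(\rho_{A|B_1}) + C^2(\rho_{A|B_2\cdots B_{N-1}})$ inside the argument of $f_\alpha$, so that Lemma~\ref{lemma5} becomes available.

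In the first (decreasing) phase, since $C(\rho_{A|B_i}) \geq C(\rho_{A|B_{i+1}\cdots B_{N-1}})$ for $i=1,\ldots,m$, I iterate exactly as in the proof of Theorem~\ref{theorem3}: at each step I apply Lemma~\ref{lemma5} but retain only the coarse tail bound $f_\alpha^\mu(\sqrt{x^2+y^2}) \geq f_\alpha^\mu(x) + h\, f_\alpha^\mu(y)$, with $h = 2^\mu - 1$. After $m$ such peeling steps this yields
\begin{equation*}
E_\alpha^\mu(\rho_{A|B_1\cdots B_{N-1}}) \geq \sum_{i=1}^{m} h^{i-1} E_\alpha^\mu(\rho_{AB_i}) + h^m f_\alpha^\mu(C(\rho_{A|B_{m+1}\cdots B_{N-1}})).
\end{equation*}

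In the second (increasing) phase, the hypothesis $C(\rho_{A|B_j}) \leq C(\rho_{A|B_{j+1}\cdots B_{N-1}})$ forces me to swap the roles of the single qubit and the remaining block: Lemma~\ref{lemma5} is invoked with $x = C(\rho_{A|B_{j+1}\cdots B_{N-1}})$ as the larger argument and $y = C(\rho_{A|B_j})$ as the smaller one. Using the truncated form of the lemma on each step $j = m+1, \ldots, N-3$ pumps out a fresh $h^{m+1} E_\alpha^\mu(\rho_{AB_j})$ term while preserving an accumulated prefactor of $h^m$ in front of the still-shrinking block term. When only the two-qubit marginal $\rho_{A|B_{N-2}B_{N-1}}$ remains, I would invoke the full three-term version of Lemma~\ref{lemma5} with $x = C(\rho_{A|B_{N-1}})$ and $y = C(\rho_{A|B_{N-2}})$; translating back via the identity $E_\alpha(\rho) = f_\alpha(C(\rho))$ for two-qubit states, this reproduces precisely $h^m Q_{AB_{N-1}}$, giving the target inequality~(\ref{ingtheorem4}).

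The main obstacle is purely combinatorial bookkeeping: I must simultaneously track the two different powers of $h$ that arise (the $h^{i-1}$ factors from the decreasing phase versus the $h^{m+1}$ factors from the increasing phase), and I must ensure the "larger vs.\ smaller" swap in the increasing regime is carried out consistently so that the hypothesis $x \geq y$ of Lemma~\ref{lemma5} remains satisfied at each step. No new analytic inequality is needed beyond Lemmas~\ref{lem1}, \ref{lem4}, and~\ref{lemma5}.
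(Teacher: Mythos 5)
Your proposal is correct and follows essentially the same route as the paper: the first phase reuses the iteration from Theorem~\ref{theorem3} (equivalently, repeated application of Lemma~\ref{lemma5} in truncated form) to reach $\sum_{i=1}^{m}h^{i-1}E_{\alpha}^{\mu}(\rho_{AB_i})+h^{m}f_{\alpha}^{\mu}(C(\rho_{A|B_{m+1}\cdots B_{N-1}}))$, and the second phase swaps the roles of the single-qubit and block arguments in Lemma~\ref{lemma5}, keeping the full three-term bound only at the final step to produce $h^{m}Q_{AB_{N-1}}$. This matches the paper's proof in both structure and bookkeeping of the powers of $h$.
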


\begin{proof}
For any $\alpha\geq 2$, $\mu\geq1$, $C(\rho_{A|B_{i}})\geq C(\rho_{A|B_{i+1}\cdots B_{N-1}})$, $i=1,2,\cdots,m$, from Theorem \ref{theorem3}, we know that
\begin{equation} \label{R23}
\begin{split}
 E_{\alpha}^{\mu}(\rho_{A|B_{1}\cdots B_{N-1}}) \geq
  \sum_{i=1}^{m}h^{i-1}E_{\alpha}^{\mu}(\rho_{AB_i})+h^{m}f_{\alpha}^{\mu}(C(\rho_{A|B_{m+1}\cdots B_{N-1}})).
\end{split}
\end{equation}
When  $C(\rho_{A|B_{j}})\leq C(\rho_{A|B_{j+1}\cdots B_{N-1}}), j=m+1, m+2, \cdots N-2$, $N>3$, we get that
\begin{equation} \label{R24}
\begin{split}
f_{\alpha}^{\mu}(C(\rho_{A|B_{m+1}\cdots B_{N-1}})) & \geq f_{\alpha}^{\mu}\bigg(\sqrt{{C^{2}(\rho_{A|B_{m+1}})+C^{2}(\rho_{A|B_{m+2}\cdots B_{N-1}})}}\bigg)\\
&\geq f_{\alpha}^{\mu}({C(\rho_{A|B_{m+2}\cdots B_{N-1}})})+(2^{\mu}-\frac{\mu^2}{\mu+1}-1)f_{\alpha}^{\mu}(C(\rho_{A|B_{m+1}}))\\
&\quad+\frac{\mu^2}{\mu+1}f_{\alpha}^{\mu-1}({C(\rho_{A|B_{m+2}\cdots B_{N-1}})})f_{\alpha}(C(\rho_{A|B_{m+1}}))\\
&\geq f_{\alpha}^{\mu}({C(\rho_{A|B_{m+2}\cdots B_{N-1}})})+hf_{\alpha}^{\mu}(C(\rho_{A|B_{m+1}}))\\
& \geq\cdots\\
  & \geq h\{f_{\alpha}^{\mu}(C(\rho_{A|B_{m+1}}))+\cdots+f_{\alpha}^{\mu}(C(\rho_{A| B_{N-3}}))\}\\
& \quad+\bigg\{f_{\alpha}^{\mu}(C(\rho_{A| B_{N-1}}))+\frac{\mu^2}{\mu+1}f_{\alpha}^{\mu-1}(C(\rho_{A| B_{N-1}}))f_{\alpha}(C(\rho_{A|B_{N-2}}))\\
& \quad +(2^{\mu}-\frac{\mu^2}{\mu+1}-1)f_{\alpha}^{\mu}(C^2(\rho_{A|B_{N-2}}))\bigg\},
\end{split}
\end{equation}
where the first inequality  is due to the monotonically increasing property of the function $f_{\alpha}(x)$  and inequality (\ref{C}), the third inequality is from the  fact that $C(\rho_{A| B_{j}})\leq C(\rho_{A|B_{j+1}\cdots B_{N-1}})$, $j=m+1,m+2,\cdots N-2$, $N>3$. According to the definition of the R\'{e}nyi-$\alpha$  entanglement, and combining inequality (\ref{R23}) and  (\ref{R24}), we obtain inequality (\ref{ingtheorem4}).
\end{proof}

\begin{remark}
 We consider a particular case of $N=3$. Note that when $\alpha\geq2$, the power $\mu\geq1$, if $E_{\alpha}(\rho_{AB_1})\geq E_{\alpha}(\rho_{AB_2})$, then we get the following result,
\begin{equation} \label{f}
\begin{split}
E_{\alpha}^{\mu} (\rho_{A|B_1B_2})&\geq E_{\alpha}^{\mu}(\rho_{AB_1})+\frac{\mu^2}{\mu+1}E_{\alpha}^{\mu-1}(\rho_{AB_1})E_{\alpha}(\rho_{AB_2})\\
& \quad+(2^{\mu}-\frac{\mu^2}{\mu+1}-1)E_{\alpha}^{\mu}(\rho_{AB_2}),
\end{split}
\end{equation}
if $E_{\alpha}(\rho_{AB_1})\leq E_{\alpha}(\rho_{AB_2})$, then
\begin{equation}
\begin{split}
E_{\alpha}^{\mu}(\rho_{A|B_1B_2}) &\geq E_{\alpha}^{\mu}(\rho_{AB_2})+\frac{\mu^2}{\mu+1}E_{\alpha}^{\mu-1}(\rho_{AB_2})E_{\alpha}(\rho_{AB_1})\\
& \quad+(2^{\mu}-\frac{\mu^2}{\mu+1}-1)E_{\alpha}^{\mu}(\rho_{AB_1}).
\end{split}
\end{equation}
\end{remark}

To see the tightness of the R\'{e}nyi-$\alpha$ entanglement directly, we give the following example.
\begin{example}
Let us consider the  state in (\ref{example}) given in Example 1.
Set $\lambda_0=\frac{\sqrt{5}}{3}$, $\lambda_1=\lambda_4=0$, $\lambda_2=\frac{\sqrt{3}}{3}$, $\lambda_3=\frac{1}{3}$, where $\alpha=2$. From  definition of the R\'{e}nyi-$\alpha$ entanglement, after simple computation, we get $E_2({\varphi}_{A|BC})=0.98230$, $E_2({\varphi}_{AB})=0.66742$, $E_2({\varphi}_{AC})=0.19010$, and $E_{2}^{\mu}(\rho_{A|BC})=(0.98230)^{\mu}\geq E_{2}^{\mu}(\rho_{AB})+(2^{\mu}-\frac{\mu^2}{\mu+1}-1)E_{2}^{\mu}(\rho_{AC})+\frac{\mu^2}{\mu+1}E_{2}^{\mu-1}(\rho_{AB})E_{2}(\rho_{AC})$ $=(0.66742)^{\mu}+\frac{\mu^2}{\mu+1}(0.66742)^{\mu-1}(0.19010)+(2^{\mu}-\frac{\mu^2}{\mu+1}-1)(0.19010)^{\mu}$.
While the formula in \cite{bib12} is $ E_{2}^{\mu}(\rho_{AB})+\frac{\mu}{2}E_{2}^{\mu-1}(\rho_{AB})E_{2}(\rho_{AC})+(2^{\mu}-\frac{\mu}{2}-1)E_{2}^{\mu}(\rho_{AC})=(0.66742)^{\mu}+\frac{\mu}{2}(0.66742)^{\mu-1}(0.19010)+(2^{\mu}-\frac{\mu}{2}-1)(0.19010)^{\mu}$.
One can see that our result is tighter than the ones in \cite{bib12} for ${\mu}\geq1$. See FIG. 2.
\begin{figure}[h]
\centering
 \includegraphics[width=11cm,height=5.5cm]{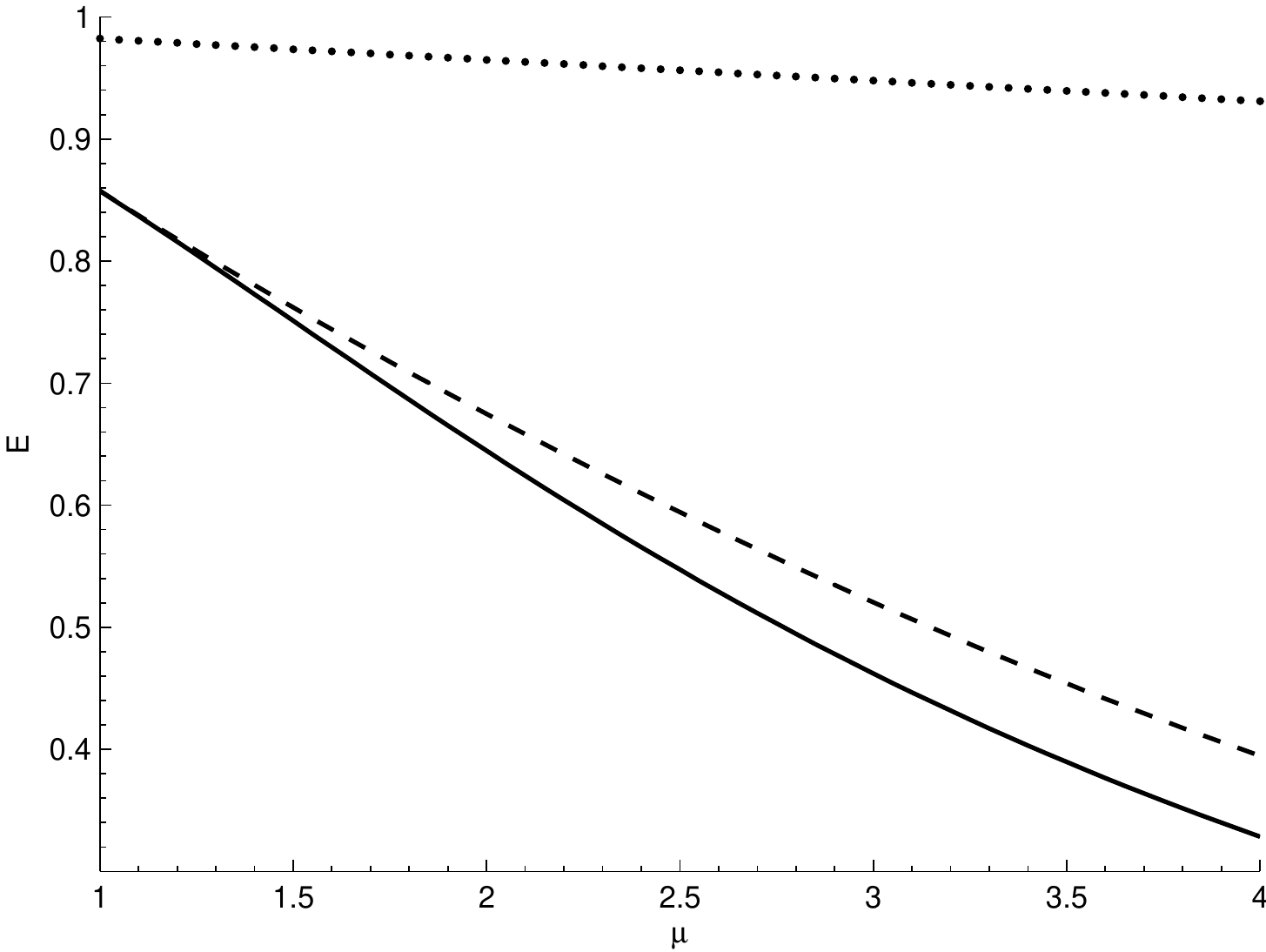}
 \caption{The axis E stands the R\'{e}nyi-$\alpha$ entanglement of $|\psi\rangle_{A|BC}$, which is a function of
  $\mu$ ($1\leq\mu\leq4$). The dotted line stands the value of $E_2^{\mu}(\rho_{A|BC})$.  The dashed line stands the lower bound given by our improved monogamy relations. The solid black line represents the lower bound given by \cite{bib12}.}
\end{figure}
\end{example}

\begin{theorem} \label{theorem5}
For any $\frac{\sqrt{7-1}}{2} \leq \alpha < 2$, the power ${\mu}\geq 1$, ${\mu}=\frac{\gamma}{2}$, and $N$-qubit mixed state  $\rho_{A|B_1\cdots B_{N-1}}$, if $C(\rho_{A|B_{i}})\geq C(\rho_{A|B_{i+1}\cdots B_{N-1}}), i=1,2,\cdots,N-2$, $N>3$, then we have
\begin{equation}\label{inqtheorem5}
\begin{split}
E_{\alpha}^{\gamma}(\rho_{A|B_{1}\cdots B_{N-1}})  & \geq \sum_{i=1}^{N-3}h^{i-1}E_{\alpha}^{\gamma}(\rho_{A|B_i})
 +h^{N-3}Q_{AB_{N-2}},
\end{split}
\end{equation}
where $h=2^{\mu}-1$, $Q_{AB_{N-2}}=E_{\alpha}^{\gamma}(\rho_{A|B_{N-2}})+\frac{\mu^2}{\mu+1}E_{\alpha}^{\gamma-2}(\rho_{A| B_{N-2}})E_{\alpha}(\rho_{A|B_{N-1}})
+(2^{\mu}-\frac{\mu^2}{\mu+1}-1)E_{\alpha}^{\gamma}(\rho_{A|B_{N-1}})$.
\end{theorem}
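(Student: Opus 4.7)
The plan is to run the same telescoping argument as in Theorem \ref{theorem3}, but with Lemma \ref{lemma6} playing the role that Lemma \ref{lemma5} played there, since the range $\frac{\sqrt{7}-1}{2}\leq\alpha<2$ forces us to work with $f_{\alpha}^{2}$ rather than $f_{\alpha}$ in the splitting inequality. First I would apply Lemma \ref{lem4} to get $E_{\alpha}^{\gamma}(\rho_{A|B_{1}\cdots B_{N-1}})\geq f_{\alpha}^{\gamma}(C(\rho_{A|B_{1}\cdots B_{N-1}}))$, then use the monotonicity of $f_{\alpha}$ together with the concurrence inequality (\ref{C}) to bound this below by $f_{\alpha}^{\gamma}(\sqrt{C^{2}(\rho_{A|B_{1}})+C^{2}(\rho_{A|B_{2}\cdots B_{N-1}})})$.

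Next I would apply Lemma \ref{lemma6} with $x=C(\rho_{A|B_{1}})$ and $y=C(\rho_{A|B_{2}\cdots B_{N-1}})$, which is legitimate because the hypothesis $C(\rho_{A|B_{1}})\geq C(\rho_{A|B_{2}\cdots B_{N-1}})$ puts us on the correct side of the domain. This yields
\[
f_{\alpha}^{\gamma}(x)+\tfrac{\mu^{2}}{\mu+1}f_{\alpha}^{\gamma-2}(x)f_{\alpha}^{2}(y)+\bigl(2^{\mu}-\tfrac{\mu^{2}}{\mu+1}-1\bigr)f_{\alpha}^{\gamma}(y).
\]
Because $\mu\geq 1$ forces $\gamma-2\geq 0$ and $f_{\alpha}$ is monotonically increasing, $x\geq y$ implies $f_{\alpha}^{\gamma-2}(x)\geq f_{\alpha}^{\gamma-2}(y)$, so the middle term dominates $\frac{\mu^{2}}{\mu+1}f_{\alpha}^{\gamma}(y)$ and the sum collapses to $f_{\alpha}^{\gamma}(C(\rho_{A|B_{1}}))+h\,f_{\alpha}^{\gamma}(C(\rho_{A|B_{2}\cdots B_{N-1}}))$, where $h=2^{\mu}-1$.

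Then I would iterate this peeling step on $f_{\alpha}^{\gamma}(C(\rho_{A|B_{2}\cdots B_{N-1}}))$, using each successive hypothesis $C(\rho_{A|B_{i}})\geq C(\rho_{A|B_{i+1}\cdots B_{N-1}})$ to invoke Lemma \ref{lemma6} and then collapse the cross term as above. After $N-3$ iterations only the pair $\{B_{N-2},B_{N-1}\}$ remains; at that final step I stop collapsing and keep all three terms of Lemma \ref{lemma6} intact, which produces exactly the expression $Q_{AB_{N-2}}$ stated in the theorem. Converting back via the two-qubit identity $E_{\alpha}(\rho_{AB_{i}})=f_{\alpha}(C(\rho_{AB_{i}}))$ then turns every $f_{\alpha}$ term into an $E_{\alpha}$ term and finishes the proof.

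The main obstacle is the collapse step $\frac{\mu^{2}}{\mu+1}f_{\alpha}^{\gamma-2}(x)f_{\alpha}^{2}(y)\geq \frac{\mu^{2}}{\mu+1}f_{\alpha}^{\gamma}(y)$: it relies crucially on $\gamma\geq 2$, i.e.\ on the convention $\mu=\gamma/2\geq 1$, and on the monotonicity of $f_{\alpha}$, so this step must be flagged carefully each iteration. A secondary technical point is that one must check that each pair $(C(\rho_{A|B_{i}}),C(\rho_{A|B_{i+1}\cdots B_{N-1}}))$ remains inside the domain $\{0\leq x^{2}+y^{2}\leq 1\}$ of Lemma \ref{lemma6}; this is automatic from inequality (\ref{C}) applied to the pure-state purification underlying the concurrence, since the total squared concurrence is bounded by $1$.
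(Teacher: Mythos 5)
Your proposal matches the paper's own proof essentially step for step: Lemma \ref{lem4} plus monotonicity and inequality (\ref{C}) to pass to $f_{\alpha}^{\gamma}$ of the split concurrence, Lemma \ref{lemma6} for the three-term expansion, the collapse of the cross term to $h\,f_{\alpha}^{\gamma}$ using $x\geq y$ and $\gamma\geq 2$, iteration, and retention of the full three-term expression at the final pair to produce $Q_{AB_{N-2}}$. Your explicit justification of the collapse step and of the domain condition $x^{2}+y^{2}\leq 1$ are details the paper leaves implicit, but the argument is the same.
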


\begin{proof}
For $\frac{\sqrt{7-1}}{2} \leq \alpha < 2$, ${\mu}\geq 1$, and ${\mu}=\frac{\gamma}{2}$, we consider an $N$-qubit mixed state $\rho_{A|B_1\cdots B_{N-1}}$, from Lemma \ref{lem4}, we have
\begin{equation}
\begin{split}
E_{\alpha}^{\gamma}(\rho_{A|B_{1}\cdots B_{N-1}})& \geq  f_{\alpha}^{2\mu} \bigg(\sqrt{C(\rho_{A|B_{1}})+C^{2}(\rho_{A|B_{2}\cdots B_{N-1}})}\bigg) \\
& \geq
f_{\alpha}^{\gamma}(C(\rho_{A|B_{1}}))+\frac{\mu^2}{\mu+1} f_{\alpha}^{\gamma-2}(C(\rho_{A| B_{1}}))f_{\alpha}^2(C(\rho_{A|B_{2}\cdots B_{N-1}}))\\
& \quad+(2^{\mu}-\frac{\mu^2}{\mu+1}-1)f_{\alpha}^{\gamma}(C(\rho_{A|B_{2}\cdots B_{N-1}})\\
 & \geq
 f_{\alpha}^{\gamma}(C(\rho_{A|B_{1}}))+hf_{\alpha}^{\gamma}(C(\rho_{A|B_{2}\cdot B_{N-1}}))\\
   & \geq
 \cdots\\
  & \geq
  f_{\alpha}^{\gamma}(C(\rho_{A|B_{1}}))+hf_{\alpha}^{\gamma}(C(\rho_{A|B_{2}}))+\cdots+h^{N-4}f_{\alpha}^{\gamma}(C(\rho_{A|B_{N-3}}))\\
  & \quad+h^{N-3}\bigg\{f_{\alpha}^{\gamma}(C(\rho_{A|B_{N-2}}))+\frac{\mu^2}{\mu+1}f_{\alpha}^{\gamma-2}(C(\rho_{A|B_{N-2}}))f_{\alpha}^2(C(\rho_{A|B_{N-1}}))\\
  & \quad+(2^{\mu}-\frac{\mu^2}{\mu+1}-1)f_{\alpha}^{\gamma}(C(\rho_{A|B_{N-1}}))\bigg\},
\end{split}
\end{equation}
where  the first inequality comes from the monotonically increasing property of the function $f_{\alpha}(x)$ and inequality (\ref{C}), the second inequality is due to Lemma \ref{lemma6}, and the third inequality is due to the fact that $C(\rho_{A|B_{i}})\geq C(\rho_{A|B_{i+1}\cdots B_{N-1}})$, $i=1,2,\cdots,N-2$. According to the definition of the R\'{e}nyi-${\alpha}$ entanglement and  the denotation of $Q_{AB_{N-2}}$, we obtain inequality (\ref{inqtheorem5}).
\end{proof}

\begin{theorem} \label{theorem6}
For $\frac{\sqrt{7-1}}{2} \leq \alpha  < 2$, the power ${\mu}\geq 1$, ${\mu}=\frac{\gamma}{2}$, and $N$-qubit mixed state  $\rho_{A|B_1\cdots B_{N-1}}$, if $C(\rho_{A|B_{i}})\geq C(\rho_{A|B_{i+1}\cdots B_{N-1}}), i=1,2,\cdots,m$, $C(\rho_{A|B_{j}})\leq C(\rho_{A|B_{j+1}\cdots B_{N-1}})$, $j=m+1,m+2,\cdots N-2$, $N>3$, then we have
\begin{equation}\label{inqtheorem6}
\begin{split}
E_{\alpha}^{\gamma}(\rho_{A|B_{1}\cdots B_{N-1}}) & \geq \sum_{i=1}^{m}h^{i-1}E_{\alpha}^{\gamma}(\rho_{AB_i})+h^{m+1}\sum_{j=m+1}^{N-3}E_{\alpha}^{\gamma}(\rho_{AB_j})+h^m{{Q}_{AB_{N-1}}},
\end{split}
\end{equation}
where $h=2^{\mu}-1$, ${{Q}_{AB_{N-1}}}=E_{\alpha}^{\gamma}(\rho_{A|B_{N-1}})+\frac{{\mu}^2}{\mu+1}E_{\alpha}^{\gamma-2}(\rho_{A| B_{N-1}})E_{\alpha}(\rho_{A|B_{N-2}})+(2^{\mu}-\frac{{\mu}^2}{\mu+1}-1)E_{\alpha}^{\gamma}(\rho_{A|B_{N-2}})$.
\end{theorem}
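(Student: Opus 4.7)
The plan is to mirror the proof of Theorem \ref{theorem4}, but with the roles of Lemma \ref{lemma5} played by Lemma \ref{lemma6}, since the latter gives exactly the $\gamma$-power inequality that is appropriate in the range $\frac{\sqrt{7}-1}{2}\leq\alpha<2$. The key observation is that Lemma \ref{lemma6} has the same algebraic shape as Lemma \ref{lemma5} (a $\frac{\mu^2}{\mu+1}$ cross term and a $2^\mu-\frac{\mu^2}{\mu+1}-1$ coefficient on the smaller term), the only differences being that the resulting power is $\gamma=2\mu$ and that the cross term carries $f_\alpha^{\gamma-2}(x)f_\alpha^2(y)$ in place of $f_\alpha^{\mu-1}(x)f_\alpha(y)$.

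First, I would invoke Theorem \ref{theorem5} on the first $m$ indices, where the monotonicity assumption $C(\rho_{A|B_i})\geq C(\rho_{A|B_{i+1}\cdots B_{N-1}})$ is exactly the hypothesis of that theorem. Stopping the iteration of Theorem \ref{theorem5} after $m$ peels (rather than carrying it all the way to $N-2$) yields
\begin{equation*}
E_{\alpha}^{\gamma}(\rho_{A|B_{1}\cdots B_{N-1}})\geq\sum_{i=1}^{m}h^{i-1}E_{\alpha}^{\gamma}(\rho_{AB_i})+h^{m}f_{\alpha}^{\gamma}(C(\rho_{A|B_{m+1}\cdots B_{N-1}})).
\end{equation*}
Thus it remains to bound the residual term $f_{\alpha}^{\gamma}(C(\rho_{A|B_{m+1}\cdots B_{N-1}}))$ from below by the second and third summands of (\ref{inqtheorem6}).

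Second, I would exploit the reversed monotonicity $C(\rho_{A|B_j})\leq C(\rho_{A|B_{j+1}\cdots B_{N-1}})$ for $j=m+1,\ldots,N-2$. Applying inequality (\ref{C}) inside the monotonically increasing $f_\alpha$ and then Lemma \ref{lemma6} with $x=C(\rho_{A|B_{j+1}\cdots B_{N-1}})$ playing the role of the larger argument and $y=C(\rho_{A|B_j})$ the smaller one, one obtains at each step
\begin{equation*}
f_{\alpha}^{\gamma}(C(\rho_{A|B_{j}\cdots B_{N-1}}))\geq f_{\alpha}^{\gamma}(C(\rho_{A|B_{j+1}\cdots B_{N-1}}))+hf_{\alpha}^{\gamma}(C(\rho_{A|B_j})),
\end{equation*}
where the $\frac{\mu^2}{\mu+1}$ cross term and the remaining $(2^\mu-\frac{\mu^2}{\mu+1}-1)$ term are consolidated into a single $h f_\alpha^{\gamma}(C(\rho_{A|B_j}))$ using $C(\rho_{A|B_{j+1}\cdots B_{N-1}})\geq C(\rho_{A|B_j})$ together with monotonicity of $f_\alpha$, exactly as in the proof of Theorem \ref{theorem4}. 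Iterating this from $j=m+1$ down through $j=N-3$ extracts the factor $h^{m+1}\sum_{j=m+1}^{N-3}f_\alpha^{\gamma}(C(\rho_{AB_j}))$.

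Third, at the final step (the pair $B_{N-2},B_{N-1}$) I would keep the full Lemma \ref{lemma6} bound rather than consolidating, producing the three-term expression
\begin{equation*}
f_{\alpha}^{\gamma}(C(\rho_{A|B_{N-1}}))+\tfrac{\mu^2}{\mu+1}f_{\alpha}^{\gamma-2}(C(\rho_{A|B_{N-1}}))f_{\alpha}^2(C(\rho_{A|B_{N-2}}))+(2^{\mu}-\tfrac{\mu^2}{\mu+1}-1)f_{\alpha}^{\gamma}(C(\rho_{A|B_{N-2}})),
\end{equation*}
which by the two-qubit identity $E_\alpha(\rho)=f_\alpha(C(\rho))$ is precisely $Q_{AB_{N-1}}$. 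Combining the first inequality with the iterated second and third steps then yields (\ref{inqtheorem6}).

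The main obstacle is bookkeeping rather than new ideas: one must verify that the cross-term simplification inside the reversed-monotonicity regime still goes through with the $\gamma-2$ exponent appearing in Lemma \ref{lemma6} (since $f_\alpha^{\gamma-2}(x)f_\alpha^2(y)\geq f_\alpha^{\gamma}(y)$ requires $f_\alpha(x)\geq f_\alpha(y)$, which is exactly what the reversed inequality $C(\rho_{A|B_{j+1}\cdots B_{N-1}})\geq C(\rho_{A|B_j})$ guarantees via monotonicity of $f_\alpha$), and that stopping the Theorem \ref{theorem5} iteration at step $m$ is legitimate—this follows immediately by reading off the intermediate step in that proof.
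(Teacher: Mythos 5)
Your proposal is correct and follows essentially the same route as the paper's own proof: Theorem \ref{theorem5} is applied and its iteration stopped after $m$ peels, then Lemma \ref{lemma6} is applied in the reversed-monotonicity regime with $x=C(\rho_{A|B_{j+1}\cdots B_{N-1}})$ and $y=C(\rho_{A|B_j})$, consolidating the cross term into $hf_\alpha^{\gamma}(C(\rho_{A|B_j}))$ at intermediate steps and keeping the full three-term bound at the last step to produce $Q_{AB_{N-1}}$. Your explicit justification of the consolidation $f_\alpha^{\gamma-2}(x)f_\alpha^{2}(y)\geq f_\alpha^{\gamma}(y)$ via monotonicity of $f_\alpha$ is exactly the point the paper uses implicitly.
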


\begin{proof}
When $C(\rho_{A| B_{i}})\geq C(\rho_{A|B_{i+1}\cdots B_{N-1}})$, $i=1,2,\cdots,m$, from  Theorem \ref{theorem5}, we have
\begin{equation} \label{gamma23}
\begin{split}
 E_{\alpha}^{\gamma}(\rho_{A|B_{1}\cdots B_{N-1}}) & \geq
  f_{\alpha}^{\gamma}(C(\rho_{A|B_{1}}))+hf_{\alpha}^{\gamma}(C(\rho_{A|B_{2}}))+\cdots+h^{m-1}f_{\alpha}^{\gamma}(C(\rho_{A|B_{m}}))\\
 & \quad +h^{m}f_{\alpha}^{\gamma}(C(\rho_{A|B_{m+1}\cdots B_{N-1}}))\\
 &= \sum_{i=1}^{m}h^{i-1}E_{\alpha}^{\gamma}(\rho_{AB_i})+h^{m}f_{\alpha}^{\gamma}(C(\rho_{A|B_{m+1}\cdots B_{N-1}})).
\end{split}
\end{equation}
When $C(\rho_{A|B_{j}})\leq C(\rho_{A|B_{j+1}\cdots B_{N-1}}), j=m+1, m+2, \cdots N-2$, $N>3$, from Lemma \ref{lemma6}, we get
\begin{equation} \label{gamma24}
\begin{split}
f_{\alpha}^{\gamma}(C(\rho_{A|B_{m+1}\cdots B_{N-1}}))&\geq f_{\alpha}^{2\mu} \bigg(\sqrt{C^{2}(\rho_{A|B_{m+1}})+C^{2}(\rho_{A|B_{m+2}\cdots B_{N-1}})}\bigg)\\
&\geq f_{\alpha}^{\gamma}(C(\rho_{A|B_{m+2}\cdots B_{N-1}}))+(2^{\mu}-\frac{\mu^2}{\mu+1}-1)f_{\alpha}^{\gamma}(C(\rho_{A|B_{m+1}}))\\
 & \quad +\frac{\mu^2}{\mu+1}f_{\alpha}^{\gamma-2}(C(\rho_{A|B_{m+2}\cdots B_{N-1}}))f_{\alpha}^2(C(\rho_{A|B_{m+1}}))\\
  & \geq  f_{\alpha}^{\gamma}(C(\rho_{A|B_{m+2}\cdots B_{N-1}}))+hf_{\alpha}^{\gamma}(C(\rho_{A|B_{m+1}}))\\
  &\geq \cdots\\
  &\geq hf_{\alpha}^{\gamma}(C(\rho_{A| B_{m+1}}))+\cdots+hf_{\alpha}^{\gamma}(C(\rho_{A| B_{N-3}}))\\
  & \quad+\bigg\{f_{\alpha}^{\gamma}(C(\rho_{A| B_{N-1}}))+\frac{\mu^2}{\mu+1}f_{\alpha}^{\gamma-2}(C(\rho_{A|B_{N-1}}))f_{\alpha}^2(C(\rho_{A|B_{N-2}}))\\
  &\quad+(2^{\mu}-\frac{\mu^2}{\mu+1}-1)f_{\alpha}^{\gamma}(C(\rho_{A|B_{N-2}}))\bigg\},
\end{split}
\end{equation}
where  the first inequality comes from the monotonically increasing property of the function $f_{\alpha}(x)$ and inequality (\ref{C}), the second inequality is due to Lemma \ref{lemma6}, and the third inequality is due to the fact that $C(\rho_{A|B_{j}})\leq C(\rho_{A|B_{j+1}\cdots B_{N-1}})$, $j=m+1, m+2, \cdots N-2$, $N>3$.  According to the denotation of $Q_{AB_{N-1}}$ and combining inequality (\ref{gamma23}) and (\ref{gamma24}), we complete the proof.
\end{proof}

\begin{remark}
 We consider a particular case of $N=3$. Note that when $\frac{\sqrt{7}-1}{2}\leq\alpha<2$, the power $\mu\geq1$ and $\mu=\frac{\gamma}{2}$, if $E_{\alpha}(\rho_{AB_1})\geq E_{\alpha}(\rho_{AB_2})$, then we get the following result,
\begin{equation}
\begin{split}
E_{\alpha}^{\gamma}(\rho_{A|B_1B_2})& \geq E_{\alpha}^{\gamma}(\rho_{AB_1})+\frac{\mu^2}{\mu+1}E_{\alpha}^{\gamma-2}(\rho_{AB_1})E_{\alpha}^{2}(\rho_{AB_2})\\
& \quad+\left.(2^{\mu}-\frac{\mu^2}{\mu+1}-1)E_{\alpha}^{\gamma}(\rho_{AB_2})\right.,
\end{split}
\end{equation}
if $E_{\alpha}(\rho_{AB_1})\leq E_{\alpha}(\rho_{AB_2})$, then
\begin{equation}
\begin{split}
E_{\alpha}^{\gamma}(\rho_{A|B_1B_2})& \geq E_{\alpha}^{\gamma}(\rho_{AB_2})+\frac{\mu^2}{\mu+1}E_{\alpha}^{\gamma-2}(\rho_{AB_2})E_{\alpha}^{2}(\rho_{AB_1})\\
& \quad+\left.(2^{\mu}-\frac{\mu^2}{\mu+1}-1)E_{\alpha}^{\gamma}(\rho_{AB_1})\right..
\end{split}
\end{equation}
\end{remark}

To see the tightness of the R\'{e}nyi-${\alpha}$ entanglement directly, we give the following example.

\begin{example}
Let us consider the  state in (\ref{example}) given in Example 1. Suppose that $\lambda_0=\frac{\sqrt{5}}{3}$, $\lambda_1=\lambda_4=0$, $\lambda_2=\frac{\sqrt{3}}{3}$, $\lambda_3=\frac{1}{3}$, and $\alpha=\frac{\sqrt{7}-1}{2}$. From  definition of the R\'{e}nyi-$\alpha$ entanglement, we have $E_{\alpha}(|\psi\rangle_{A|BC})=0.99265$,  $E_{\alpha}(|\psi\rangle_{AB})=0.83477$, $E_{\alpha}(|\psi\rangle_{AC})=0.41466$, and $E_{\alpha}^{\gamma}(\rho_{A|BC})=(0.99265)^{\gamma}\geq E_{\alpha}^{\gamma}(\rho_{AB})+\frac{\gamma^2}{4+2\gamma}E_{\alpha}^{\gamma-2}(\rho_{AB})E_{\alpha}^{2}(\rho_{AC}) +\left.(2^{\frac{\gamma}{2}}-\frac{\gamma^2}{4+2\gamma}-1)E_{\alpha}^{\gamma}(\rho_{AC})\right.=(0.83477)^{\gamma}+\frac{\gamma^2}{4+2\gamma}(0.83477)^{\gamma-2}(0.41466)^{2}$
$+\left.(2^{\frac{\gamma}{2}}-\frac{\gamma^2}{4+2\gamma}-1)(0.41466)^{\gamma}\right..$
While the formula in \cite{bib12} is $ E_{\alpha}^{\gamma}(\rho_{AB})+\frac{\gamma}{4}E_{\alpha}^{\gamma-2}(\rho_{AB})E_{\alpha}^{2}(\rho_{AC})+(2^{\frac{\gamma}{2}}-\frac{\gamma}{4}-1)E_{\alpha}^{\gamma}(\rho_{AC})=(0.83477)^{\gamma}+\frac{\gamma}{4}(0.83477)^{\gamma-2}\\(0.41466)^{2}+(2^{\frac{\gamma}{2}}-\frac{\gamma}{4}-1)(0.41466)^{\gamma}.$ One can see that our result is tighter than the result in \cite{bib12} for $\mu=\frac{\gamma}{2}$, $\gamma\geq2$. See FIG. 3.
\begin{figure}[h]
\centering
\includegraphics[width=11cm,height=5.5cm]{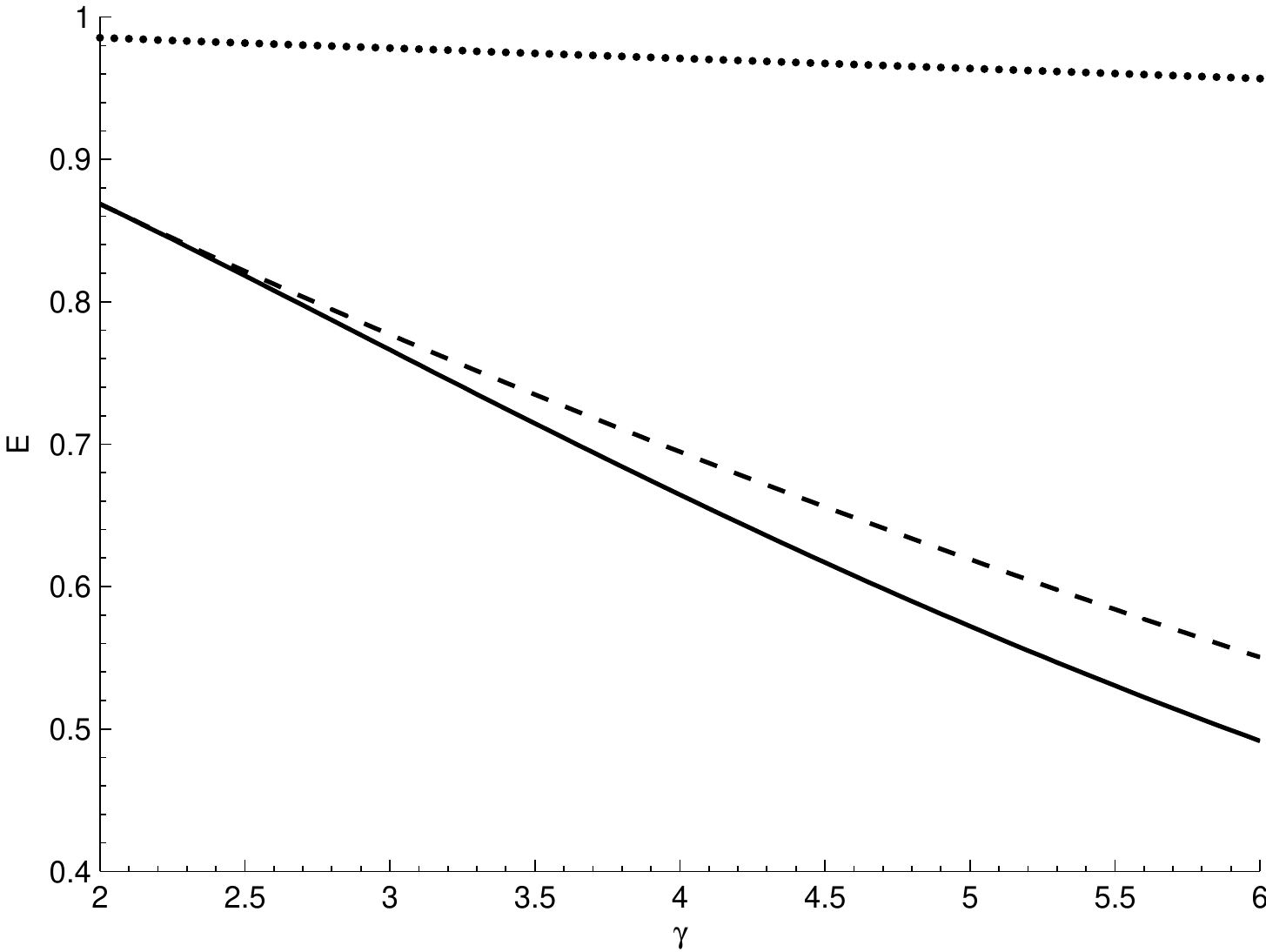}
\caption{The axis E stands the R\'{e}nyi-$\alpha$ entanglement of $|\psi\rangle_{A|BC}$, which is a function of $\gamma$ ($2\leq\gamma\leq6$). The dotted line stands the value of $E_{\frac{\sqrt{7}-1}{2}}^{\gamma}(\rho_{A|BC})$.   The dashed line stands the lower bound given by our improved monogamy relations. The solid black line represents the lower bound given by \cite{bib12}.}
\label{3}
\end{figure}
\end{example}

\section{Conclusion}\label{sec5}
Multipartite entanglement can be regarded as a fundamental problem in the theory of quantum entanglement. Our results may contribute to a fuller understanding of the Tsallis-$q$ and R\'{e}nyi-$\alpha$  entanglement in  multipartite systems. In this paper, we have explored some tighter monogamy relations  in terms of  $\eta$th power of the Tsallis-$q$ entanglement $T_{q}^{\eta}(\rho_{A|B_{1}\cdots B_{N-1}})$ ( $\eta\geq1$, $2\leq q\leq 3$) and  the R\'{e}nyi-$\alpha$ entanglement $E_{\alpha}^{\mu}(\rho_{A|B_{1}\cdots B_{N-1}})$ ($\mu\geq1$, $\alpha\geq2$ ) and  $E_{\alpha}^{\gamma}(\rho_{A|B_{1}\cdots B_{N-1}})$ ($\gamma\geq2$, $\frac{\sqrt{7}-1}{2}\leq\alpha<2$ ).
We show that these new monogamy relations of multiparty entanglement have   larger lower bounds and are tighter than the existing results \cite{bib11,bib12}. Our  approach may also be applied to the study of monogamy properties  related to other quantum correlations.

\section*{Acknowledgments}
This work is  supported by the Yunnan Provincial Research Foundation for Basic Research, China (Grant No. 202001AU070041), the Research Foundation of Education Bureau of Yunnan Province, China (Grant No. 2021J0054), the Basic and Applied Basic Research Funding Program of Guangdong Province (Grant No. 2019A1515111097), the Natural Science Foundation of Kunming University of Science and Technology (Grant No. KKZ3202007036, KKZ3202007049).

\section*{Statements and Declarations}

{\bf Funding}

This work is  supported by the Yunnan Provincial Research Foundation for Basic Research, China (Grant No. 202001AU070041), the Research Foundation of Education Bureau of Yunnan Province, China (Grant No. 2021J0054), the Basic and Applied Basic Research Funding Program of Guangdong Province (Grant No. 2019A1515111097), the Natural Science Foundation of Kunming University of Science and Technology (Grant No. KKZ3202007036, KKZ3202007049).

{\bf  Competing Interests}

The authors have no relevant financial or non-financial interests to disclose.

{\bf Author Contributions}

All authors contributed to the study conception and design. Material preparation, data collection and analysis were performed by Rongxia Qi and Yanmin Yang.  The first draft and critically revised the manuscript were done by all authors. All authors read and approved the final manuscript.

\end{document}